\theoremstyle{plain}
\newtheorem{thm}{\protect\theoremname}
  \theoremstyle{plain}
  \newtheorem{assumption}{\protect\assumptionname}
  \theoremstyle{definition}
  \newtheorem{defn}{\protect\definitionname}
  \theoremstyle{plain}
  \newtheorem{cor}{\protect\corollaryname}
  \theoremstyle{plain}
  \newtheorem{lyxalgorithm}{\protect\algorithmname}
  \theoremstyle{plain}
  \newtheorem{prop}{\protect\propositionname}
  \theoremstyle{plain}
  \newtheorem{lem}{\protect\lemmaname}
\setlist[enumerate,1]{label=(\alph*)}
\setlist[enumerate,2]{label=(\roman*)}
  \providecommand{\algorithmname}{Algorithm}
  \providecommand{\assumptionname}{Assumption}
  \providecommand{\corollaryname}{Corollary}
  \providecommand{\definitionname}{Definition}
  \providecommand{\propositionname}{Proposition}
\providecommand{\theoremname}{Theorem}
  \providecommand{\algorithmname}{Algorithm}
  \providecommand{\assumptionname}{Assumption}
  \providecommand{\corollaryname}{Corollary}
  \providecommand{\definitionname}{Definition}
  \providecommand{\lemmaname}{Lemma}
  \providecommand{\propositionname}{Proposition}
\providecommand{\theoremname}{Theorem}
\begin{document}
\global\long\def\CC{\mathbb{C}}
 \global\long\def\SS{S^{1}}
 \global\long\def\RR{\mathbb{R}}
 \global\long\def\actson{\curvearrowright}
 \global\long\def\ra{\rightarrow}
 \global\long\def\z{\mathbf{z}}
 \global\long\def\ZZ{\mathbb{Z}}
 \global\long\def\NN{\mathbb{N}}
 \global\long\def\sgn{\mathrm{sgn}\:}
 \global\long\def\RRpos{\RR_{>0}}
 \global\long\def\var{\mathrm{var}}
 \global\long\def\circint{\int_{-\pi}^{\pi}}
 \global\long\def\F{\mathcal{F}}
 \global\long\def\pb#1{\langle#1\rangle}
 \global\long\def\op{\mathrm{op}}
 \global\long\def\Op{\mathrm{op}}
 \global\long\def\supp{\mathrm{supp}}
 \global\long\def\ceil#1{\lceil#1\rceil}
 \global\long\def\TV{\mathrm{TV}}
 \global\long\def\floor#1{\lfloor#1\rfloor}
 \global\long\def\vt{\vartheta}
 \global\long\def\vp{\varphi}
 \global\long\def\class#1{[#1]}
 \global\long\def\of{(\cdot)}
 \global\long\def\one{1}
 \global\long\def\cov{\mathrm{cov}}
 \global\long\def\CC{\mathbb{C}}
 \global\long\def\SS{S^{1}}
 \global\long\def\RR{\mathbb{R}}
 \global\long\def\actson{\curvearrowright}
 \global\long\def\ra{\rightarrow}
 \global\long\def\z{\mathbf{z}}
 \global\long\def\ZZ{\mathbb{Z}}
 \global\long\def\h{\mu}
 \global\long\def\convr{*_{\RR}}
 \global\long\def\x{\mathbf{x}}
 \global\long\def\ve{\epsilon}
 \global\long\def\cv{\mathfrak{c}}
 \global\long\def\wh#1{\hat{#1}}
 \global\long\def\norm#1{\left|\left|#1\right|\right|}
 \global\long\def\Tmean{T}
 \global\long\def\Tslope{m}
 \global\long\def\degC#1{#1^{\circ}\mathrm{C}}
\global\long\def\X{\mathbf{X}}
\global\long\def\bbeta{\boldsymbol{\beta}}
\global\long\def\b{\mathbf{b}}
\global\long\def\Y{\mathbf{Y}}
\global\long\def\e{\bm{\epsilon}}
\global\long\def\s{\mathbf{s}}
\global\long\def\t{\mathbf{t}}
\global\long\def\R{\mathbf{R}}
\global\long\def\dAc{\partial A_c}
\global\long\def\cl{\mathrm{cl}}

\title{Confidence regions for excursion sets in asymptotically Gaussian
random fields, with an application to climate }

\author{Max Sommerfeld$^{1}$, Stephen Sain$^{2}$, Armin Schwartzman$^{3}$ }

\institution{$^{1}$FBMS, Universität Göttingen\\
	$^{2}$The Climate Corporation\\
 $^{3}$Department of Statistics, North Carolina State University}
\maketitle
\begin{abstract}
The goal of this paper is to give confidence regions for the excursion
set of a spatial function above a given threshold from repeated noisy
observations on a fine grid of fixed locations. Given an asymptotically
Gaussian estimator of the target function, a pair of data-dependent
nested excursion sets are constructed that are sub- and super-sets
of the true excursion set, respectively, with a desired confidence.
 Asymptotic coverage probabilities are determined via a multiplier bootstrap method,
not requiring Gaussianity of the original data nor stationarity or
smoothness of the limiting Gaussian field. The method is used to determine
regions in North America where the mean summer  and winter  temperatures are expected
to increase by mid 21st century by more than 2 degrees Celsius. \end{abstract}
\begin{keywords}
coverage probability, exceedance regions, general linear model, level
sets
\end{keywords}

\section{Introduction}

\label{sec:Introduction}Our motivation comes from the following problem.
Faced with a global change in temperature over the globe within the
next century, it is important to assess which geographical regions
are particularly at risk of extreme temperature change. The data used
here, obtained from the North American Regional Climate Change Assessment
Program (NARCCAP) project \citep{Mearns2009,Mearns2012,Mearns2013},
consists of two sets of 29 spatially registered arrays of mean seasonal temperatures for summer (June-August) and winter (December-February) evaluated at a fine grid of fixed
locations 0.5 degrees in geographic longitude and latitude apart over North America over two time periods:
late 20th century (1971-1999) and mid 21st century (2041-2069). Specifically, the data was produced by the WRFG climate model \citep{Michalakes2004} using boundary conditions from the CGCM3 global model \citep{Flato2005}. 
We would like to determine the regions whose difference in mean summer
or winter temperature between the two periods is greater than the
$2^{\circ}C$ benchmark \citep{Rogelj2009,Anderson2011}.
However, the observed differences may be confounded by the natural
year-to-year temperature variability. Can we set confidence bounds
on such regions that reflect the year-to-year variability in the data?

Unlike the usual data setup of spatial statistics, the above data
setup is more similar to that of population studies in brain imaging,
where a difference map between two conditions is estimated from repeated
co-located
image observations at a fine spatial grid under those conditions (see e.g.
\citet{Worsley1996,Genovese2002,Taylor2007,Schwartzman2010}). The methods in this paper are inspired
by that kind of analysis.

In general, suppose that we observe $n$ random fields $Y_{i}(\s),i=1,\ldots,n$,
over a spatial domain $S$, modeled as realizations of a general linear
model indexed by $\s\in S$. The target function $\mu:S\ra\RR$ could
be one of the parameters in the model indexed by $\s$, in our
case the mean difference temperature field. With a proper design, fitting the linear model at each location $\s$ will produce a consistent
and asymptotically Gaussian estimator $\hat{\mu}_n:S\ra\RR$ as $n$ increases.
Asymptotically Gaussian estimators indexed by $\s$ also appear in nonparametric
density estimation and regression. In those settings
$n$ would be the number of sample points. 

Let $A_{c}$ be the excursion
set of $\mu$ above a fixed threshold $c$, defined as $A_{c}:=A_{c}(\mu):=\left\{ \s\in S:\mu(\s)\geq c\right\} $,
and denote the analog for $\wh{\mu}_{n}$ by $\wh A_{c}:=A_{c}(\wh{\mu}_{n})$.
We wish to obtain confidence regions $\wh A_{c}^{\pm}$ that are nested
in the sense that $\wh A_{c}^{+}\subset\wh A_{c}\subset\wh A_{c}^{-}$
and for which the probability that 
\begin{equation}
\wh A_{c}^{+}\subset A_{c}\subset\wh A_{c}^{-}\label{eq:inclusion}
\end{equation}
holds is asymptotically above a desired level, say $90\%$. The sets
$\wh A_{c}^{\pm}$ here are obtained as excursion sets of the standardized
observed field $\wh{\mu}$ and we  call them Coverage Probability
Excursion (CoPE) sets. Assuming that the estimated field $\wh{\mu}$
satisfies a central limit theorem (CLT), we show that the probability that \eqref{eq:inclusion} holds is given asymptotically by the
distribution of the supremum of the limiting Gaussian random field
on the boundary $\partial A_{c}$ of the true excursion set. Using a plug-in estimate for the unknown boundary, we propose
a simple and efficient multiplier bootstrap procedure \citep{Wu1986,Hardle1993,Mammen1992,Mammen1993}, that does not require estimating the unknown (not necessarily stationary) correlation function
of the limiting field. The validity of this procedure for very high-dimensional
data has recently been shown by \citet{Chernozhukov2013}. 

For illustration, Figure \ref{data_res} shows CoPE sets for the temperature
data. The regions within the red boundary ($\wh A_{c}^{+}$) have
the highest confidence of being at risk, while the regions outside
the green boundary ($\wh A_{c}^{-}$) have the highest confidence
of not being at risk. Over repeated sampling, there is a probability of about $90\%$ that
the regions at risk include those within the red boundary and exclude
those outside the green boundary. 

The problem of finding confidence sets for spatial excursion sets,
sometimes also called exceedance regions or level sets, has been studied
in the past in two major contexts that substantially differ from the problem under consideration here. In the geostatistics literature,
the target function is itself a Gaussian field. In consequence, the excursion and the contour sets are random themselves. The data in this setting is a partial realization of the field, that is, the values of a realization of the field at relatively few spatial locations. This severe limitation of available information is compensated by assuming that the covariance structure of the field is known.
  This problem has been addressed from a frequentist perspective
in terms of confidence regions for level contours \citep{Lindgren1995,Wameling2003,French2014}
and for excursion sets  \citep{French2013}. 
Incidentally, our techniques share some similarities with \cite{French2014}, although we will show that distinguishing between level contours and excursion sets is important.
In a Bayesian setting for latent Gaussian models, \cite{Bolin2014} address uncertainty in both, contours and excursion sets.

The  second setting in which the problem has received attention is non-parametric density estimation and regression.  Here, the target function is a
probability density or regression function, estimated from realizations of a random variable with values in $\RR^q$ for some $q$. While the estimation of both level sets and contours have been well studied
\citep{Tsybakov1997,Cavalier1997,Cuevas2006,Willett2007,Singh2009,Rigollet2009}, there is less literature on confidence statements.  
\cite{Mason2009} showed asymptotic normality of plug-in level set estimates with respect to the measure of symmetric set difference. \cite{Mammen2013} proposed a bootstrapping scheme to obtain confidence sets analogous to our CoPE sets from vector-valued samples.

The problem of finding the threshold for our CoPE sets involves computation of the tail probability of the supremum of a limiting Gaussian random field. In \cite{French2014} this computation was done by Monte Carlo simulation assuming that the covariance structure of the field is known. More generally for unknown covariance function, as we attempt here, this problem was solved elegantly by \cite{Taylor2007} using the Gaussian kinematic formula. However, this method requires that the observations themselves be Gaussian and requires the field to be differentiable. The multiplier bootstrap allows us to avoid both these assumptions while being extremely fast to compute.
We  compare the finite sample performance of the Gaussian kinematic formula method and the multiplier bootstrap in a simulation.

All computations in this paper were performed using \textup{R} \citep{R2014}. All required functions for computation and visualization of CoPE sets and in particular an implementation of the Algorithm \ref{Algorithm} are available in the \textup{R}-package \textup{cope}  \citep{cope2015}.

\begin{figure}
\begin{centering}
\subfloat[CoPE sets for increase of mean summer temperature.]{\includegraphics[width=0.45\textwidth]{./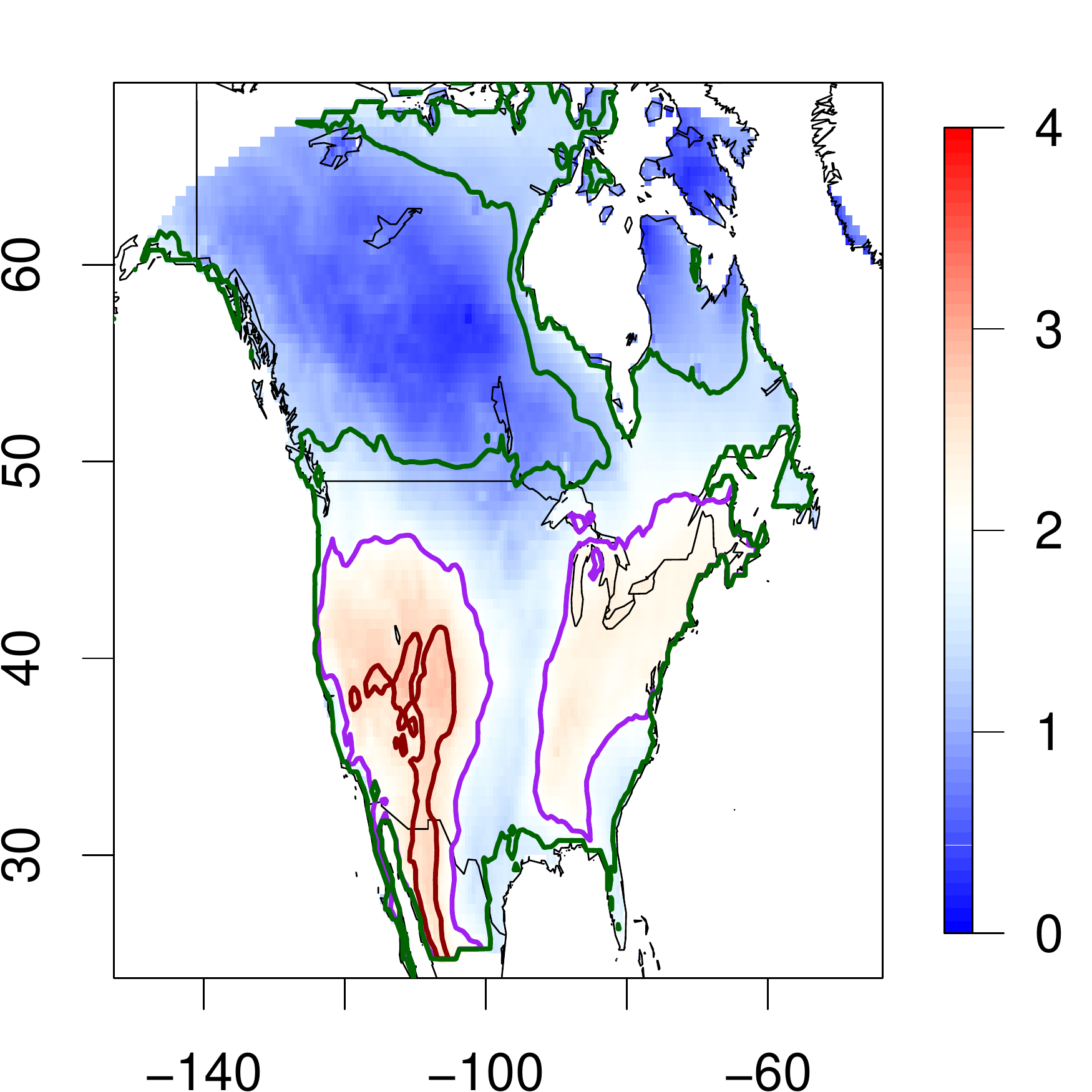}}\hfill{}\subfloat[CoPE sets for increase of mean winter temperature.]{\includegraphics[width=0.45\textwidth]{./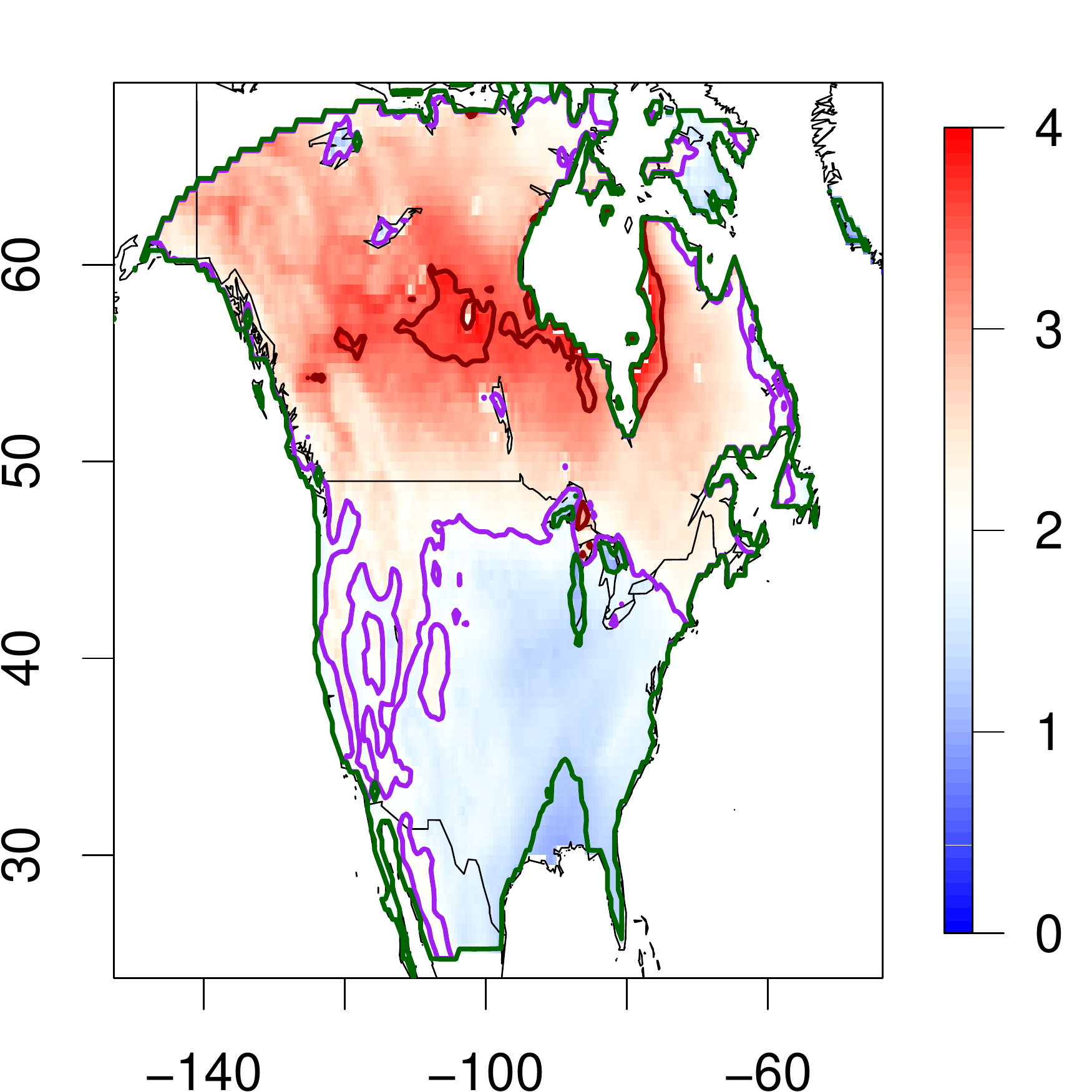}}
\par\end{centering}

\protect\caption{Output of our method for the increase $b_{1}(\protect\s)$ of the
mean summer (June-August) and winter (December-February) temperatures. Shown are heat maps of the estimator
$\protect\wh b_{1}(s)$. The uncertainty in the excursion set estimate
$\protect\wh A_{c}=A_{c}(\protect\wh b_{1})$ (purple boundary) for
$c=2^{\circ}C$ is captured by the CoPE sets $\protect\wh A_{c}^{+}$
(red boundary) and $\protect\wh A_{c}^{-}$ (green boundary). The
threshold was obtained according to Theorem \ref{thm:inclusion} to
guarantee inclusion $\protect\wh A_{c}^{+}\subset A_{c}\subset\protect\wh A_{c}^{-}$
with confidence $1-\alpha=0.9$.
The horizontal and vertical axes are indexed in degrees longitude and latitude, respectively.}
\label{data_res} 
\end{figure}

\subsubsection*{Outline of the paper}

In Section \ref{sec:Error-control} we propose a thresholding scheme to obtain CoPE
sets $\wh A_{c}^{\pm}$ as in (\ref{eq:inclusion}) from an estimator
$\wh{\mu}$ of $\mu$, only requiring continuity  of $\mu$  and, most importantly,
that $\wh{\mu}$ is asymptotically Gaussian. We show that the
asymptotic coverage probability is equal to
the tail probability of the limiting Gaussian field on the boundary
$\partial A_{c}$ of the excursion set $A_{c}$. 

Section \ref{sec:CoPEandFARE_for_glm} is devoted to presenting results
and algorithms for the construction of CoPE sets when the target function
is the parameter function in a general linear model. First, in Section
\ref{sub:CLTs-for-parameters} we derive central limit theorems for
these quantities. Then, in Section \ref{sub:CoPE-sets:-Approximating}
we show how to obtain the threshold for the construction of CoPE sets
by an efficient multiplier bootstrap. We compare it with a method
for Gaussian smooth noise based on \citet{Taylor2007}.
Section \ref{sub:Algorithm} combines the previous results in a concise
algorithm for the construction of CoPE sets.

Section \ref{sec:Simulations} shows a toy example to investigate
the non-asymptotic performance of CoPE sets and finally, in Section
\ref{sec:Application-to-climate}, we apply our methods for a general
linear model to the climate data. All proofs are in the appendix.

\section{\label{sec:Error-control}\label{sub:Controlling-the-coverage}Error
control for excursion sets - CoPE sets}

The domain $S\subset\RR^{N}$ on which all our functions and processes
are defined, is assumed to be a compact but not necessarily connected
subset of Euclidean space. 
%For a  set $A\subset S$, let $\partial A$ be its topological boundary.
We call the topological boundary $\partial A_c$ of the excursion set $A_c$ the \textit{contour of $\mu$ at the level $c$}. 
\begin{assumption}
\label{assu:CLT_and_mu_muhat}We assume that 
\begin{enumerate}
	\item the target function $\mu$ is continuous and the level set $\{\s:\mu(\s)=c\}$ is equal to $\dAc$.
	\item the estimator $\wh{\mu}_{n}(\s)$
	is continuous in $\s$ (for all $n\in\NN$).
	\item there is a sequence of numbers $\tau_{n}$ and a continuous function
	$\sigma:S\ra\RR^{+}$ such that
	\begin{equation}
	\frac{\wh{\mu}_{n}(\s)-\mu(\s)}{\tau_{n}\sigma(\s)}\ra G(\s),\label{eq:CLT-1}
	\end{equation}
	weakly in $C(S)$. Here, $G$ is a Gaussian field on $S$ with mean
	zero, unit variance and continuous sample paths with probability one.
\end{enumerate}
\end{assumption}
We will obtain nested estimates by thresholding the surface $\wh{\mu}_{n}(\s)$
as follows: 
\begin{equation}
\hat{A}_{c}^{+}  :=\wh A_{c}(+a):=\left\{ \s:\frac{\hat{\mu}_{n}(\s)-c}{\tau_{n}\sigma(\s)}\ge+a\right\} ,\quad
\hat{A}_{c}^{-}  :=\wh A_{c}(-a):=\left\{ \s:\frac{\hat{\mu}_{n}(\s)-c}{\tau_{n}\sigma(\s)}\ge-a\right\} ,
\label{eq:upper-lower-contours}
\end{equation}
where $a^{+}$ and $a^{-}$ are appropriate non-negative constants
to be determined. Note that in this notation $\wh A_{c}=\wh A_{c}(0)$
and $\wh A^{\pm}$ are themselves excursion sets. Moreover, for any
choice of $a\geq0$ we have the inclusions $\wh A_{c}^{+}\subset\wh A_{c}\subset\wh A_{c}^{-}$,
and hence the estimates obtained via (\ref{eq:upper-lower-contours})
are in fact nested. The function used to define the excursion sets
is similar to the test statistic used in \citet{French2013}.

The following main result shows how the constant $a$ in \eqref{eq:upper-lower-contours}
can be chosen such that $\wh A_{c}^{+}\subset A_{c}\subset\wh A_{c}^{-}$
with a predefined probability.
\begin{thm}
\label{thm:inclusion}If the Assumptions \ref{assu:CLT_and_mu_muhat}
hold, then
\[
\lim_{n\ra\infty}P\left[\wh A_{c}^{+}\subset A_{c}\subset\wh A_{c}^{-}\right]=P\left[\sup_{\partial A_{c}}|G(s)|\leq a\right].
\]
\end{thm}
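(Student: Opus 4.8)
The plan is to transfer the two set inclusions into statements about the standardized fluctuation field and then exploit the weak convergence. Write $Z_{n}(\s):=\frac{\wh{\mu}_{n}(\s)-\mu(\s)}{\tau_{n}\sigma(\s)}$, so that $Z_{n}\ra G$ weakly in $C(S)$, and observe that the field defining the CoPE sets decomposes as
\[
W_{n}(\s):=\frac{\wh{\mu}_{n}(\s)-c}{\tau_{n}\sigma(\s)}=Z_{n}(\s)+\frac{\mu(\s)-c}{\tau_{n}\sigma(\s)}.
\]
In this notation $\wh A_{c}^{+}\subset A_{c}$ fails exactly when some $\s$ with $\mu(\s)<c$ has $W_{n}(\s)\ge a$, while $A_{c}\subset\wh A_{c}^{-}$ fails exactly when some $\s$ with $\mu(\s)\ge c$ has $W_{n}(\s)\le-a$. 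The crucial structural fact is the sign of the drift $\frac{\mu(\s)-c}{\tau_{n}\sigma(\s)}$: it is negative precisely on the region $\{\mu<c\}$ relevant to the first inclusion and nonnegative on the region $\{\mu\ge c\}$ relevant to the second. Since the estimator is consistent we have $\tau_{n}\ra0$, so (using that the continuous $\sigma$ is bounded below by a positive constant on the compact set $S$) the drift diverges to $\mp\infty$ away from the contour $\partial A_{c}=\{\mu=c\}$, and all the relevant behaviour localizes to $\partial A_{c}$, where $W_{n}=Z_{n}$.

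For the upper bound I would argue pathwise. On the event that both inclusions hold: applying the second inclusion on $\partial A_{c}\subset A_{c}$ gives $Z_{n}(\s)=W_{n}(\s)>-a$ for every $\s\in\partial A_{c}$; and since every contour point is a limit of exterior points $\s_{k}$ with $\mu(\s_{k})<c$, the first inclusion gives $W_{n}(\s_{k})=Z_{n}(\s_{k})<a$, whence $Z_{n}(\s)\le a$ by continuity of $Z_{n}$. Together these yield $\sup_{\partial A_{c}}|Z_{n}|\le a$ pathwise, so $P[\wh A_{c}^{+}\subset A_{c}\subset\wh A_{c}^{-}]\le P[\sup_{\partial A_{c}}|Z_{n}|\le a]$. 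Because $f\mapsto\sup_{\partial A_{c}}|f|$ is $1$-Lipschitz on $C(S)$, the continuous mapping theorem gives $\sup_{\partial A_{c}}|Z_{n}|\ra\sup_{\partial A_{c}}|G|$ in distribution, and Portmanteau for the closed set $(-\infty,a]$ yields $\limsup_{n}P[\wh A_{c}^{+}\subset A_{c}\subset\wh A_{c}^{-}]\le P[\sup_{\partial A_{c}}|G|\le a]$.

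For the matching lower bound I would first discard mass far from the contour. Weak convergence makes $\sup_{S}|Z_{n}|$ tight, so given $\delta>0$ I can fix $K$ with $P[\sup_{S}|Z_{n}|>K]<\delta$ for all large $n$. On $\{\sup_{S}|Z_{n}|\le K\}$ and for $n$ large, the divergent drift makes both inclusions hold automatically outside the collar $\bar U_{\eta}:=\{|\mu-c|\le\eta\}$; and because the drift has the favorable sign inside each half of the collar (so $W_{n}<Z_{n}$ where $\mu<c$ and $W_{n}\ge Z_{n}$ where $\mu\ge c$), the single condition $\sup_{\bar U_{\eta}}|Z_{n}|<a$ forces both inclusions on the collar too. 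Hence $P[\wh A_{c}^{+}\subset A_{c}\subset\wh A_{c}^{-}]\ge P[\sup_{\bar U_{\eta}}|Z_{n}|<a]-\delta$. Applying the continuous mapping theorem to the fixed compact set $\bar U_{\eta}$ and Portmanteau for the open set $(-\infty,a)$ gives $\liminf_{n}P[\wh A_{c}^{+}\subset A_{c}\subset\wh A_{c}^{-}]\ge P[\sup_{\bar U_{\eta}}|G|<a]-\delta$; letting $\eta\downarrow0$ (so $\sup_{\bar U_{\eta}}|G|\downarrow\sup_{\partial A_{c}}|G|$ by continuity of $G$) and then $\delta\downarrow0$ produces $\liminf_{n}P[\wh A_{c}^{+}\subset A_{c}\subset\wh A_{c}^{-}]\ge P[\sup_{\partial A_{c}}|G|<a]$.

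The two bounds sandwich the limit between $P[\sup_{\partial A_{c}}|G|<a]$ and $P[\sup_{\partial A_{c}}|G|\le a]$, which coincide once $P[\sup_{\partial A_{c}}|G|=a]=0$; this atomlessness of the supremum of a nondegenerate Gaussian field (for $a>0$) is classical and closes the gap. I expect the main obstacle to be exactly this lower bound: making the localization to the collar rigorous and uniform in $n$, correctly tracking the favorable drift sign on each half of the collar, and disposing of the boundary case $\sup_{\partial A_{c}}|G|=a$. The upper bound, by contrast, is a clean pathwise inclusion followed by a single application of the continuous mapping and Portmanteau theorems.
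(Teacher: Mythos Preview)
Your argument is correct and follows the same overall strategy as the paper: localize the two inclusion events to a collar around $\partial A_c$, show the far-field contribution is negligible by tightness of $\sup_S|Z_n|$, and identify the near-field contribution with $\sup_{\partial A_c}|G|$. The upper bound via the pathwise inclusion $\{\wh A_c^+\subset A_c\subset\wh A_c^-\}\subset\{\sup_{\partial A_c}|Z_n|\le a\}$ is exactly the content of the paper's $\limsup$ argument, only phrased without the auxiliary margin $\delta$ (one small slip: the second inclusion gives $W_n(\s)\ge -a$, not strict inequality, but your conclusion $\sup_{\partial A_c}|Z_n|\le a$ is unaffected).

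The genuine difference is in the lower bound. The paper couples the collar width to $n$, choosing $\eta_n\to 0$ with $\eta_n\tau_n^{-1}\to\infty$, and then has to prove an auxiliary lemma that $A_c^{\eta_n}\to\partial A_c$ in Hausdorff distance and invoke asymptotic equicontinuity (the modulus-of-continuity criterion for weak convergence in $C(S)$) to transfer $\sup_{A_c^{\eta_n}}|Z_n|$ to $\sup_{\partial A_c}|Z_n|$. You instead keep $\eta$ fixed, pass to the limit in $n$ via the continuous mapping theorem on the fixed compact $\bar U_\eta$, and only afterwards send $\eta\downarrow 0$ using continuity of the limit field $G$. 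This decoupling is a legitimate simplification: it bypasses both the Hausdorff lemma and the equicontinuity argument entirely, at the cost of an extra $\varepsilon$--$\delta$ layer. Both routes need, and both use, the atomlessness of $\sup_{\partial A_c}|G|$ to close the gap between $P[\cdot<a]$ and $P[\cdot\le a]$; the paper invokes it at the end of the $\limsup$ step, you at the very end. Note also that both proofs tacitly use $\tau_n\to 0$ (you to make the drift diverge off the collar, the paper to construct $\eta_n$ with $\eta_n\tau_n^{-1}\to\infty$), which is implicit in the setup rather than stated in Assumption~\ref{assu:CLT_and_mu_muhat}.
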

A direct consequence of Theorem \ref{thm:inclusion} is that $\hat{A}_{c}^{+}\subset A_{c}\subset\hat{A}_{c}^{-}$
with asymptotic probability at least $1-\alpha$ if we choose $a$ such that
$P\left[\sup_{\s\in\partial A_{c}}|G(\s)|\geq a\right]\leq\alpha$.
The determination of $a$ poses a computational challenge
since the distribution of the supremum of $|G(\s)|$ and the set $\partial A_{c}$
are unknown. In Section \ref{sub:CoPE-sets:-Approximating} we propose
an easy and fast way to approximate this distribution by a multiplier
bootstrap.

As mentioned in the Introduction,  confidence sets for the excursion set $A_c$ yield confidence sets for the contour $\partial A_c$. More precisely, we have the following
\begin{cor}\label{cor:inclusion_contour}
	Under the assumptions of Theorem \ref{thm:inclusion}, we have 
	\[
	\lim_{n\ra\infty} P\left[\partial A_c\subset \cl\left(\wh{A}_c^-\setminus\wh{A}_c ^+\right)\right]
	=P\left[\sup_{\partial A_{c}}|G(s)|\leq a\right],
	\]
	where $\cl$ denotes the topological closure.
\end{cor}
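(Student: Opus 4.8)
The plan is to sandwich the contour event between two events that both have limiting probability $P[\sup_{\dAc}|G|\le a]$, drawing the lower bound from Theorem~\ref{thm:inclusion} and the upper bound from weak convergence. Throughout write $g_n(\s):=\frac{\wh\mu_n(\s)-c}{\tau_n\sigma(\s)}$ and $Z_n(\s):=\frac{\wh\mu_n(\s)-\mu(\s)}{\tau_n\sigma(\s)}$, both continuous on $S$ by Assumption~\ref{assu:CLT_and_mu_muhat}; by part (a) of that assumption $\mu\equiv c$ on $\dAc$, so $g_n=Z_n$ on $\dAc$. Note that $\wh A_c^+=\{g_n\ge a\}$ and $\wh A_c^-=\{g_n\ge -a\}$, whence $\wh A_c^-\setminus\wh A_c^+=g_n^{-1}([-a,a))$. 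I assume $a>0$; the degenerate case $a=0$ forces an empty band and is disposed of separately.

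First I would prove the deterministic event inclusion
\[
\{\wh A_c^+\subset A_c\subset \wh A_c^-\}\ \subseteq\ \{\dAc\subset\cl(\wh A_c^-\setminus \wh A_c^+)\}.
\]
Fix $\s_0\in\dAc$; since $A_c=\{\mu\ge c\}$ is closed we have $\s_0\in A_c\subset \wh A_c^-$. If $\s_0\notin \wh A_c^+$ then already $\s_0\in \wh A_c^-\setminus \wh A_c^+$. Otherwise $g_n(\s_0)\ge a>0$, and because $\s_0$ is a limit point of $S\setminus A_c$ I can choose $\s_k\to\s_0$ with $\mu(\s_k)<c$; the inclusion $\wh A_c^+\subset A_c$ forces $\s_k\notin\wh A_c^+$, while continuity of $g_n$ with $g_n(\s_0)\ge a>-a$ gives $g_n(\s_k)>-a$, i.e. $\s_k\in\wh A_c^-$, for large $k$. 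Hence $\s_0\in\cl(\wh A_c^-\setminus\wh A_c^+)$. Combined with Theorem~\ref{thm:inclusion}, this yields $\liminf_n P[\dAc\subset\cl(\wh A_c^-\setminus\wh A_c^+)]\ge P[\sup_{\dAc}|G|\le a]$.

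For the matching upper bound I would establish the second deterministic inclusion
\[
\{\dAc\subset\cl(\wh A_c^-\setminus\wh A_c^+)\}\ \subseteq\ \{\sup_{\dAc}|Z_n|\le a\}.
\]
Since $g_n$ is continuous, $\cl\big(g_n^{-1}([-a,a))\big)\subset g_n^{-1}([-a,a])$, so any $\s\in\dAc\cap\cl(\wh A_c^-\setminus\wh A_c^+)$ satisfies $-a\le g_n(\s)\le a$; as $g_n=Z_n$ on $\dAc$ this is exactly $|Z_n(\s)|\le a$. Because $f\mapsto\sup_{\dAc}|f|$ is continuous on $C(S)$, the set $\{\sup_{\dAc}|\cdot|\le a\}$ is closed, so $Z_n\Rightarrow G$ together with the Portmanteau theorem gives $\limsup_n P[\sup_{\dAc}|Z_n|\le a]\le P[\sup_{\dAc}|G|\le a]$. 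Chaining the two inclusions then squeezes $P[\dAc\subset\cl(\wh A_c^-\setminus\wh A_c^+)]$ to $P[\sup_{\dAc}|G|\le a]$.

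I expect the main obstacle to be the first, topological, inclusion, specifically the case where a contour point lies in the inner set $\wh A_c^+$: there the point is not itself in the band and one must manufacture a sequence in the band converging to it, which is precisely where the characterization $\dAc=\{\mu=c\}$, the closedness of $A_c$, continuity of $g_n$, and the strict positivity $a>0$ are all needed. The remaining steps are routine: the closure-versus-preimage inclusion is immediate from continuity, and the passage to the limit invokes only Theorem~\ref{thm:inclusion} and the Portmanteau theorem, so no separate anti-concentration (atomlessness of $\sup_{\dAc}|G|$ at $a$) argument is required.
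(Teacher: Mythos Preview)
Your proof is correct and follows the same sandwiching strategy as the paper: bound the contour event below by the excursion-set inclusion event (then invoke Theorem~\ref{thm:inclusion}) and above by $\{\sup_{\dAc}|Z_n|\le a\}$ (then pass to the limit). You supply considerably more detail than the paper does---in particular a careful topological argument for the first event inclusion, which the paper simply asserts---and you obtain the $\limsup$ bound via the Portmanteau theorem for closed sets rather than appealing to continuity of the law of $\sup_{\dAc}|G|$, which is a small but clean improvement.
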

Note that, conversely, confidence sets for the contour do not automatically give confidence sets for the excursion set. In Figure \ref{fig:contourvsexcursion} we show a simple schematic example of a pair of nested sets $\wh{A}_c^\pm$ for which $\partial A_c\subset \wh{A}_c^-\setminus\wh{A}_c$ holds but  $\wh A_{c}^{+}\subset A_{c}\subset\wh A_{c}^{-}$ does not. 

\begin{wrapfigure}{R}{0.5\textwidth}
	\begin{center}
		\includegraphics[width=0.48\textwidth]{./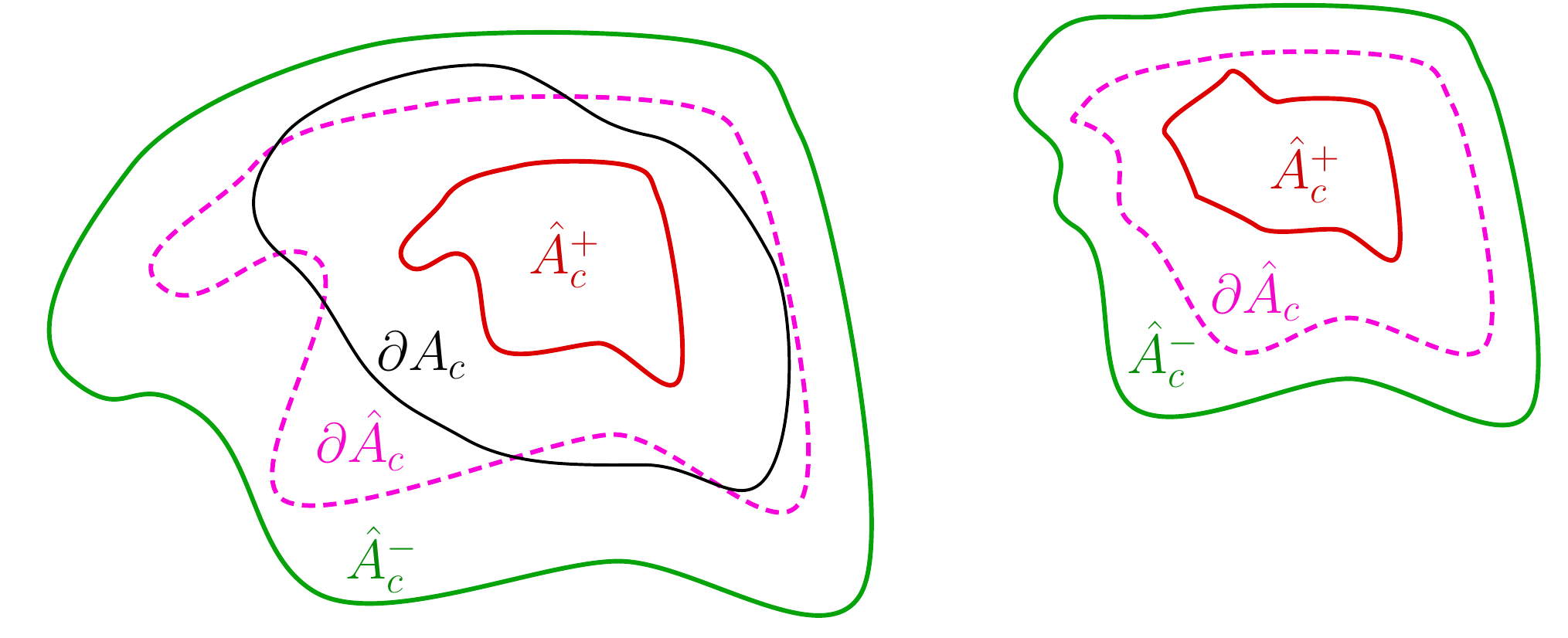}
	\end{center}
	\caption{A simple example of nested sets $\wh{A}_c^\pm$ that bound the contour $\dAc$ but do not satisfy the inclusion $\wh A_{c}^{+}\subset A_{c}\subset\wh A_{c}^{-}$} \label{fig:contourvsexcursion}
\end{wrapfigure}

In fact, excluding these cases is the more laborious part of the proof of Theorem \ref{thm:inclusion}. The key is to divide the region $S$ into a close-range zone where $\mu(\s)$ is close to $c$ and a long-range zone. More precisely, the close-range zone is given by the inflated boundary $A_{c}^{\eta}=\{\s\in S:c-\eta\sigma(\s)\leq\mu(\s)\leq c+\eta\sigma(\s)\}$. Then, the strategy of the proof is to let the parameter $\eta$ go to zero at an appropriate rate as $n\ra\infty$ such that, eventually, the probability of a part of $\wh{A}_c^+$ falsely appearing in the long-range zone $S\setminus A_{c}^{\eta}$ (as shown in Figure \ref{fig:contourvsexcursion}) vanishes. The probability of making an error remains in the close-range zone, and is asymptotically given by $P\left[\sup_{\partial A_{c}}|G(s)|> a\right]$.

We want to emphasize that Theorem 1 and its Corollary are valid for any
estimator $\hat{\mu}_n$ satisfying Assumption 1. Thus, they hold generally
whether the estimator is based on an increasing number of repeated observations
(like in our data example) or an increasing number of sampling spatial points
(like in the spatial statistics and nonparametric regression problems). However,
for concreteness, we focus on the former situation, which we develop in detail
in the following section.

\section{\label{sec:CoPEandFARE_for_glm}CoPE sets for general linear models}

For concreteness and application to the climate data, we here present
how CoPE sets are obtained, in theory and in practice, when the target function is a parameter
function in a general linear model and $n$ is the number of repeated
observations.

\subsection{\label{sub:CLTs-for-parameters}Asymptotic coverage probabilities}

We begin by proving an analog of Theorem \ref{thm:inclusion} for the parameters of a
general linear model. The most difficult part is to establish a Central Limit Theorem as in \eqref{eq:CLT-1}. This will require conditions on the error field as well as on the design. We consider the model
\begin{equation}
\Y(\s)=\X\b(\s)+\e(\s),\quad\s\in S\subset\RR^{N}\label{eq:glm}
\end{equation}
where $\Y(\s)$ is an $n\times1$ vector of observations, $\X$ is
a known $n\times p$ design matrix, $\b(\s)=(b_{1}(\s),\dots,b_{p}(\s))$
is an unknown $p\times1$ vector of parameters and $\e(\s)=\left(\ve_{1}(\s),\dots,\ve_{n}(\s)\right)$
with $\ve_{1},\dots,\ve_{n}\stackrel{i.i.d.}{\sim}\ve$ an unknown
stochastic process. Results of the kind presented here are well-known
(see e.g. \citet{Eicker1963}). We show and prove versions
tailored for our specific purpose for coherence and convenience. 

The least squares regression estimator for $\b(\s)$ in the model
(\ref{eq:glm}) is 
\[
\widehat{\b}(\s)=\left(\X^{T}\X\right)^{-1}\X^{T}\Y(\s).
\]
In the notation of Section \eqref{sec:Error-control}, the target
function $\mu$ is now one of the parameter functions of the model
\eqref{eq:glm}, $b_{1}$, say. Of course, the choice of $b_1$ is arbitrary and
any other coefficient of $\b$ may be considered, with the obvious modifications
of the
assumptions and theorems. Naturally, $\wh b_{1}(\s)$ now
plays the role of the estimator $\wh{\mu}_{n}(\s)$.

Further, define $\sigma:S\ra\RR_{\geq0}$ via $\sigma^{2}(\s)=\var\left[\ve(\s)\right]$
and the correlation function $\cv:S\times S\ra(-1,1)$ by 
\[
\cv(\s_{1},\s_{2})=\frac{\cov\left[\ve(\s_{1}),\ve(\s_{2})\right]}{\sigma(\s_{1})\sigma(\s_{2})}.
\]
Recall that for $1\leq p\leq\infty$, the $p$-norm $||A||_{p}$ of
a matrix $A$ is defined to be $||A||_{p}=\sup_{||x||_{p}=1}||Ax||_{p}$.
Hence, by definition $||Ax||_{p}\leq||A||_{p}||x||_{p}$ for all $x$.
In the special case $p=\infty$ the matrix norm $||A||_{\infty}$
is the maximum absolute row sum of $A$, i.e. 
\[
||A=(a_{ij})_{1\leq i\leq m,1\leq j\leq n}||_{\infty}=\max_{1\leq i\leq m}\sum_{j=1}^{n}|a_{ij}|.
\]

\begin{defn}
	\begin{enumerate}[(a)]
		\item For vectors $\s,\t\in\RR^{N}$ define the \emph{block} $(\s,\t]=(s_{1},t_{1}]\times\dots\times(s_{N},t_{N}]\subset\RR^{N}$
		and for a stochastic process $\ve(\s)$ with index set containing
		$(\s,\t]$ define the \emph{increment} of $\ve(\s)$ around $(\s,\t]$
		(cf. \citet{Bickel1971}) as 
		\[
		\ve\left((\s,\t]\right)=\sum_{\kappa_{1}=0,1}\cdots\sum_{\kappa_{N}=0,1}\left(-1\right)^{N-\sum_{j}\kappa_{j}}\ve\left(s_{1}+\kappa_{1}(t_{1}-s_{1}),\cdots,s_{N}+\kappa_{N}(t_{N}-s_{N})\right).
		\]
		\item We denote the Lebesgue measure of a set $A\subset S$ by $|A|$. For non-negative numbers $\delta$, $\gamma$, $\beta$ we say that
		the error field $\ve(\s)$ has the properties 
		\begin{itemize}
			\item \textsf{{N1-$\delta$}}, if $\sup_{\s\in S}\sigma(\s)^{-(2+\delta)}E|\ve(\s)|^{2+\delta}<\infty$; 
			\item \textsf{{N2-$(\gamma,\beta)$}}, if there exists a constant $C>0$ such that
			$E|(\sigma^{-1}\ve)(B)|^{2+\gamma}\leq C\left|B\right|^{1+\beta}$
			for all blocks $B\subset S$.
		\end{itemize}
	\end{enumerate}
\end{defn}
In dimension $N=1$ the definition of an increment yields $\ve((s_{1},t_{1}]=\ve(t_{1})-\ve(s_{1})$,
the usual increment. In Dimension $N=2$ we get 
$\ve((\s,\t])=\ve(t_{1},t_{2})-\ve(s_{1},t_{2})-\ve(t_{1},s_{2})+\ve(s_{1},s_{2}).$

\begin{assumption}
\label{ass:noise-assumption-1}Assume that 
\begin{enumerate}
\item the parameter functions $\b(\s) = (b_1(\s),\dots, b_p(\s))$ are continuous and the level set $\{\s:b_1(\s)=c\}$ is equal to $\partial A_c(b_1)$.
\item the noise field $\ve(\s)$ has continuous sample paths with probability
one. Moreover, a centered unit variance Gaussian field with correlation function $\cv(\s_1,\s_2)$ also has continuous sample paths with probability one.
\item the variance function $\sigma(\s)$ is continuous.
\item there exists a $\delta>0$ such that $\ve(\s)$ has the property \textsf{\textup{N1}}-$\delta$
and $n\left|\left|\X(\X^{T}\X)^{-\nicefrac{1}{2}}\right|\right|_{\infty}^{2+\delta}\ra0$
as $n\ra\infty$. 
\item there exist $\gamma\geq0$ and $\beta>0$ such that $\ve(\s)$ has
the property \textsf{\textup{N2}}-$(\gamma,\beta)$ and $\max_{n\in\NN}n\norm{\X(\X^{T}\X)^{-\nicefrac{1}{2}}}_{\infty}^{2+\gamma}<\infty$. 
\end{enumerate}
\end{assumption}
Part (a) and (b) of Assumption \ref{ass:noise-assumption-1} are tantamount to the first two conditions in Assumption \ref{assu:CLT_and_mu_muhat}. Parts (c), (d) and (e) will ensure that the parameter function $b_1$ enjoys a Central Limit Theorem. Note that the assumptions on the increments of the error field and on the design matrix $\X$ are coupled. 
The following Theorem \ref{thm:weak-convergence} gives convergence results and explains how we can obtain CoPE sets for $b_1$.
\begin{thm}
\label{thm:weak-convergence}Under Assumption \ref{ass:noise-assumption-1}
the following is true.
\begin{enumerate}[(a)]
	\item the weak convergence 
	\[
	\sqrt{\X^{T}\X}\left(\wh{\b}(\s)-\b(\s)\right) / \sigma(\s)\ra G^{\otimes p}(\s),
	\]
	holds, where $G^{\otimes p}$ is an $\RR^{p}$-valued mean zero,
	unit variance Gaussian random field with correlation function 
	\[
	\cov\left[G^{\otimes p}(\s_{1}),G^{\otimes p}(\s_{2})\right]=\cv(\s_{1},\s_{2})\mathbf{I}_{p}.
	\]
	\item if additionally  the top-left entry $\pi_{n}=\left[\left(\X^{T}\X\right)^{-1}\right]_{11}$
	of the matrix $\X^{T}\X$ is not zero and with $e_{1}^{T}=(1,0,\dots,0)$,
	the first standard basis vector, 
	\[
	\pi_{n}^{-\nicefrac{1}{2}}e_{1}^{T}(\X^{T}\X)^{-\nicefrac{1}{2}}\ra\mathbf{v}^{T}\in\RR^{p},\quad\mbox{as\ }n\ra\infty,
	\]
	then we have
	\[
	||\mathbf{v}||_{2}^{-1}\pi_{n}^{-\nicefrac{1}{2}}\sigma(\s)^{-1}(\wh b_{1}(\s)-b_{1}(\s))\ra G(\s),
	\]
	weakly, where $G$ is a mean zero, unit variance Gaussian field on
	$S$ with correlation function $\cov\left[G(\s_{1}),G(\s_{2})\right]=\cv(\s_{1},\s_{2})$. 
	
	\item under the additional assumptions of part (b), and if we define
          \begin{equation}\label{eq:cope_sets_glm}
        \hat{A}_{c}^{\pm}(b_1)  :=\left\{
          \s:\frac{\hat{b}_1(\s)-c}{||\mathbf{v}||_{2}\pi_{n}^{\nicefrac{1}{2}}\sigma(\s)}\ge
        \pm a\right\} ,
	\end{equation}
	then 
	\[
	\lim_{n\ra\infty}P\left[\wh A_{c}^{+}(b_1)\subset A_{c}(b_1)\subset\wh A_{c}^{-}(b_1)\right]=P\left[\sup_{\partial A_{c}(b_1)}|G(s)|\leq a\right].
	\]
\end{enumerate}
\end{thm}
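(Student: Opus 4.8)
The plan is to prove part (a) as a functional central limit theorem in $C(S)^{p}$ and to deduce (b) and (c) as comparatively short consequences. Writing $M_{n}:=\X(\X^{T}\X)^{-1/2}$, a direct computation gives $\sqrt{\X^{T}\X}(\wh{\b}(\s)-\b(\s))=M_{n}^{T}\e(\s)$, and since $M_{n}^{T}M_{n}=\mathbf{I}_{p}$ the columns of $M_{n}$ are orthonormal. The process to be studied is therefore
\[
Z_{n}(\s):=\frac{M_{n}^{T}\e(\s)}{\sigma(\s)}=\sum_{i=1}^{n}m_{n,i}\,(\sigma^{-1}\ve_{i})(\s),
\]
where $m_{n,i}$ is the $i$-th row of $M_{n}$ and $\ve_{1},\dots,\ve_{n}$ are i.i.d.\ centered copies of $\ve$. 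Using $M_{n}^{T}M_{n}=\mathbf{I}_{p}$ and $\cov[\ve(\s_{1}),\ve(\s_{2})]=\sigma(\s_{1})\sigma(\s_{2})\cv(\s_{1},\s_{2})$ one checks that $\cov[Z_{n}(\s_{1}),Z_{n}(\s_{2})]=\cv(\s_{1},\s_{2})\mathbf{I}_{p}$, which matches the claimed limit. It then remains to establish (i) convergence of the finite-dimensional distributions and (ii) asymptotic tightness.

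For (i) I would fix points $\s_{1},\dots,\s_{r}\in S$ and, by the Cram\'er--Wold device, reduce to a single scalar sum $T_{n}=\sum_{i}\xi_{n,i}$ of the independent triangular-array terms $\xi_{n,i}=\sum_{l}(\theta_{l}^{T}m_{n,i})(\sigma^{-1}\ve_{i})(\s_{l})$. By the covariance identity above, $\var(T_{n})$ equals exactly the variance of the corresponding linear combination of $G^{\otimes p}$, and Lyapunov's condition holds with exponent $2+\delta$: property \textsf{N1}-$\delta$ bounds $E|(\sigma^{-1}\ve)(\s_{l})|^{2+\delta}$, the quantity $\sum_{i}|\theta_{l}^{T}m_{n,i}|^{2}=\norm{\theta_{l}}_{2}^{2}$ is bounded, and the remaining factor is $O(\norm{M_{n}}_{\infty}^{\delta})\to0$ because $n\norm{M_{n}}_{\infty}^{2+\delta}\to0$ forces $\norm{M_{n}}_{\infty}\to0$. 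This gives (i).

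Part (ii) is the crux of the argument. For a block $B=(\s,\t]$ the $k$-th coordinate increment is $(Z_{n})_{k}(B)=\sum_{i}(M_{n})_{ik}(\sigma^{-1}\ve_{i})(B)$, a sum of independent centered terms, and I would bound its $(2+\gamma)$-th moment by Rosenthal's inequality. This splits the estimate into a ``Gaussian'' part $(\sum_{i}(M_{n})_{ik}^{2}\,E|(\sigma^{-1}\ve)(B)|^{2})^{(2+\gamma)/2}$ and a ``jump'' part $\sum_{i}|(M_{n})_{ik}|^{2+\gamma}E|(\sigma^{-1}\ve)(B)|^{2+\gamma}$. Using $\sum_{i}(M_{n})_{ik}^{2}=1$, property \textsf{N2}-$(\gamma,\beta)$ (applied to the second moment via Jensen, and directly to the $(2+\gamma)$-th), and $\sum_{i}|(M_{n})_{ik}|^{2+\gamma}\le\norm{M_{n}}_{\infty}^{\gamma}$ together with $\max_{n}n\norm{M_{n}}_{\infty}^{2+\gamma}<\infty$, both parts are at most a constant multiple of $|B|^{1+\beta}$, uniformly in $n$. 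A Cauchy--Schwarz step converts this single-block bound into the neighbouring-block product condition required by the tightness criterion of \citet{Bickel1971}; after embedding $S$ into a cube this yields (ii), and with (i) and the a.s.\ continuity of the limit completes (a). This estimate is the main obstacle, since it is precisely where the coupling between the moment conditions on $\ve$ and the growth conditions on the design enters.

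Parts (b) and (c) are short. For (b), write $w_{n}^{T}:=e_{1}^{T}(\X^{T}\X)^{-1/2}$, so that $\wh b_{1}(\s)-b_{1}(\s)=w_{n}^{T}M_{n}^{T}\e(\s)$ and $\pi_{n}=\norm{w_{n}}_{2}^{2}$; the standardized statistic then equals $(\pi_{n}^{-1/2}w_{n})^{T}Z_{n}(\s)$. Since $\norm{\pi_{n}^{-1/2}w_{n}}_{2}=1$ for every $n$, the hypothesis $\pi_{n}^{-1/2}w_{n}\to\mathbf{v}$ forces $\norm{\mathbf{v}}_{2}=1$; as $\sup_{\s}\norm{Z_{n}(\s)}$ is bounded in probability by (a), the term $(\pi_{n}^{-1/2}w_{n}-\mathbf{v})^{T}Z_{n}$ is uniformly $o_{P}(1)$, while $\mathbf{v}^{T}Z_{n}\ra\mathbf{v}^{T}G^{\otimes p}$ by the continuous-mapping theorem. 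The limit $G:=\mathbf{v}^{T}G^{\otimes p}$ is a centered Gaussian field of unit variance with covariance $\mathbf{v}^{T}(\cv(\s_{1},\s_{2})\mathbf{I}_{p})\mathbf{v}=\cv(\s_{1},\s_{2})$ and, by Assumption \ref{ass:noise-assumption-1}(b), a.s.\ continuous paths; this is (b). Finally, (c) follows by applying Theorem \ref{thm:inclusion} to $\wh\mu_{n}:=\wh b_{1}$ with $\tau_{n}:=\norm{\mathbf{v}}_{2}\pi_{n}^{1/2}$ and variance function $\sigma$: Assumption \ref{ass:noise-assumption-1}(a) supplies the first condition of Assumption \ref{assu:CLT_and_mu_muhat}, continuity of $\wh b_{1}$ follows from the continuous paths of $\e$, and (b) is exactly the required convergence \eqref{eq:CLT-1}. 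As the sets \eqref{eq:cope_sets_glm} coincide with \eqref{eq:upper-lower-contours} under this identification, Theorem \ref{thm:inclusion} gives the claim.
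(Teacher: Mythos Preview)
Your proof follows the same architecture as the paper's: convergence of finite-dimensional distributions via Cram\'er--Wold and Lyapunov, tightness via a block-moment bound invoking \citet{Bickel1971}, and (b)--(c) as short consequences of (a) and Theorem~\ref{thm:inclusion}. The one substantive difference lies in the tightness estimate. The paper bounds $E\norm{M_n^T(\sigma^{-1}\e)(B)}_1^{2+\gamma}$ by pulling out the matrix $1$-norm $\norm{M_n^T}_1=\norm{M_n}_\infty$ and then appealing directly to $n\norm{M_n}_\infty^{2+\gamma}<\infty$, whereas you argue coordinate-wise via Rosenthal's inequality, using the column orthonormality $\sum_i (M_n)_{ik}^2=1$ to control the quadratic part separately from the higher-moment part. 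Your route is arguably the more careful one: the paper's displayed inequality tacitly relies on $E\big(\sum_j|x_j|\big)^{2+\gamma}\le n\,E|x_1|^{2+\gamma}$ for i.i.d.\ $x_j$, which does not hold without an additional power of $n$ that the stated design assumption would not absorb; Rosenthal sidesteps this by decoupling the second-moment and $(2+\gamma)$-moment contributions. Your observation in (b) that necessarily $\|\mathbf v\|_2=1$ (since $\|\pi_n^{-1/2}w_n\|_2=1$ for every $n$) is also a small clarification the paper leaves implicit.
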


\subsection{\label{sub:CoPE-sets:-Approximating}Approximating the tail probabilities
of $G$}

\subsubsection{\label{sub:multiplier_bootstrap}Multiplier bootstrap}

In order to obtain CoPE sets from Theorem \ref{thm:weak-convergence} we
need to know the tail distribution of the supremum of the limiting
(non-stationary) Gaussian field $G$. In applications, as for example
our climate data, the distribution of $G$ (and hence of its supremum)
is unknown, because it depends on the unknown (non-stationary) covariance
function. In our motivating application the only information we have
about $G$ is contained in the residuals $(R_{1}(\s),\dots,R_{n}(\s))=\R(\s)=\Y(s)-\X\wh{\b}(\s)$
of the linear regression.

A way of approximating the distribution of the limiting Gaussian field
$G$ in this situation is given by the multiplier or wild bootstrap
first introduced by \citet{Wu1986} and later studied by
\citet{Mammen1992,Mammen1993,Hardle1993}.
It is based on the following idea. Let $g_{1},\dots,g_{n}$ be i.i.d
standard Gaussian random variables independent of the data. Consider
the random field
\begin{equation}
\tilde{G}(\s)=n^{-\frac{1}{2}}\sum_{j=1}^{n}g_{j}R_{j}(\s).\label{eq:Mult_Boot}
\end{equation}
Then, conditional on the residuals $\left\{ R_{j}(\s)\right\} _{j=1}^{n}$,
the field $\tilde{G}(\s)$ is Gaussian and has covariance 
\begin{align}
\cov\left[\tilde{G}(\s_{1}),\tilde{G}(\s_{2})\right] & =\frac{1}{n}\sum_{i,j=1}^{n}R_{i}(\s_{1})R_{j}(\s_{2})\cov\left[g_{i},g_{j}\right]\nonumber \\
 & =\frac{1}{n}\sum_{j=1}^{n}R_{j}(\s_{1})R_{j}(\s_{2}),\label{eq:sample_cov}
\end{align}
the sample covariance. For large $n$, we expect the sample covariance to resemble the true covariance. The idea is to take the distribution of $\tilde{G}$  as an approximation of the distribution of $G$. In particular,
we can approximate $P\left[\sup_{\s\in\partial A_{c}}|G(\s)|\leq a\right]$, needed in Theorem \ref{thm:inclusion},
by $P\left[\sup_{\s\in\partial A_{c}}|\tilde{G}(\s)|\leq a|\left\{ R_{j}(\s)\right\} _{j=1}^{n}\right]$.
In practice, the latter can be efficiently computed by generating a large number $M$ of
i.i.d. copies $\tilde{G}_{1}(\s),\dots,\tilde{G}_{M}(\s)$ of $\tilde{G}(\s)$,
conditional on $\left\{ R_{j}(\s)\right\} _{j=1}^{n}$, via \eqref{eq:Mult_Boot}
and evaluating $M^{-1}\sum_{j=1}^{M}\one\left[\sup_{\s\in S}\left|\tilde{G}_{j}(\s)\right|\leq a\right]$.

{\bf Theoretical considerations:} The above approximation of the distribution of the supremum requires some justification because the sample covariance \eqref{eq:sample_cov} itself is not a good estimator of the true covariance function in our high dimensional setting, where the number of locations is much higher than the sample size $n$ (about ten thousand grid points vs. 58 field realizations in the climate data). However, the claim is about the distribution of the supremum of the process instead. For a discrete set of locations, the  distribution of the supremum $\sup_{\s\in\dAc}|G(\s)|$ is similar to the distribution of the maximum of a high-dimensional Gaussian random vector, recently considered by \cite{Chernozhukov2013}. Substantially extending the results of \cite{Mammen1993} in the high dimensional setting, \cite{Chernozhukov2013} show that the distribution of the maximum can be well approximated  by the Gaussian multiplier bootstrap using realizations of a not necessarily Gaussian random vector with the same covariance matrix. In this sense, the multiplier bootstrap is valid in our setting. This is confirmed by simulations in Section 4 below.

{\bf Computational considerations:} Besides these theoretical considerations, the multiplier bootstrap
is also computationally attractive. In comparison with the direct
simulation of the limiting field. While it is theoretically possible to
simulate a number of realizations of a non-stationary Gaussian field with a given
covariance as in \eqref{eq:sample_cov} and to obtain tail probabilities
from these, in practice this is  computationally infeasible. Assuming that all fields
are observed at $L$ locations in $S$ the direct method first requires
computing the $L\times L$ covariance matrix which is of complexity
$\mathcal{O}(nL^{2})$. Then, a Cholesky decomposition of the covariance
matrix must be computed in time $\mathcal{O}(L^{3})$. Finally, for
each realization a matrix-vector product with the triangular matrix
from the Cholesky decomposition is required and hence $N$ realizations
can be obtained in $\mathcal{O}(L^{2}N)$ time. This yields a total
complexity of $\mathcal{O}(L^{3}+L^{2}(n+N))$ which is  prohibitively
large (in our data $L=9051$). This problem has also been encountered
by \citet{Adler2012} in a setting where information about
the field is available through the true covariance function instead
of realizations of it.

In contrast, creating one multiplier bootstrap realizations of the
field $\tilde{G}$ requires computing a linear combination of $n$ vectors of dimension $L$ which  is of complexity $\mathcal{O}(nL)$. Hence, $N$ multiplier bootstrap realizations can be generated in $\mathcal{O}(nNL)$ time. Further,
note that the simulation of $N$ bootstrap realizations 
can be written as a matrix multiplication. Let $E$ be a $L\times n$
matrix such that each column is one residual $R_{j}$ and let $V$
be a $n\times N$ matrix with i.i.d. standard Gaussian entries. Then,
the columns of $EV$ correspond to realizations of $\tilde{G}$. This
makes the multiplier bootstrap very efficient because very fast implementations of  matrix
multiplication are available and it is an operation that can easily be parallelized

A comparison of the computational burden to existing methods is not easy, since all work that is known to the authors considers different settings. However, to put the above considerations in perspective, we briefly discuss computational requirements of the method proposed by \cite{French2014}. This method requires \cite[Sec. 3.7]{French2014} the computation of a kriging estimate in time $\mathcal{O}(n_l^3)$, where $n_l$ is the number of observed locations, and a Cholesky decomposition which is of complexity $\mathcal{O}((n_l+n_g)^3)$, where  $n_g$ is the number of grid locations (corresponding to $L$ in our notation). This increased complexity compared to the multiplier bootstrap is reflected in actual empirical computation times. According to his own experiments, the method of \cite{French2014} applied to $n=100$ observed locations and a grid size of $100\times100$, can be computed, on average, in just under five minutes. In contrast, the entire data analysis for the mean summer temperature, including linear regression at each grid point and computation of CoPE sets with the multiplier bootstrap,  with a grid size of $L\approx10,000$ and $n=58$ is performed in under five seconds on a machine comparable to the one used by \cite{French2014}.

\subsubsection{\label{sub:GKF}An alternative method for smooth fields}

If one is willing to assume that the limiting field is twice differentiable
and that the error field $\epsilon(s)$ in \eqref{eq:glm} is Gaussian itself then the Gaussian Kinematic
Formula (GKF) (e.g. \citet{Taylor2006,Taylor2005,Taylor2007})
offers another way to approximate tail probabilities. Our motivation
for presenting it here is twofold: First, it offers, under the additional
assumptions made above, an elegant and accurate way of computing tail probabilities
of Gaussian fields that does not require simulations and that has been successfully applied \citep{Taylor2007}.
Second, it shows that, at least for smooth fields, the tail probability
of the supremum is intrinsically low-dimensional. More precisely,
for a field on $\RR^{N}$ it is given (up to an exponential error
term) by $N+1$ numbers. This gives an additional justification for the ability of the multiplier bootstrap method to estimate the tail probability despite the high dimensionality of the field.

The GKF is based on two properties of smooth Gaussian fields. The first is that
for such fields the exceedance probability for high thresholds can
be approximated by the expected Euler characteristic. More precisely,
for any set $B\subset S$ with smooth boundary,
\begin{equation}
P\left[\sup_{\s\in B}G(\s)\geq a\right]=E\left[\chi(A_{a}(G)\cap B)\right]+\mathcal{O}(\exp(-a^{2}/2)),\label{eq:EEC-approx}
\end{equation}
where $\chi$ is the Euler characteristic (see e.g. \citet{Adler2007}).
The second property is that the expected Euler characteristic in \eqref{eq:EEC-approx}
has a closed formula. Indeed, with $\Lambda(\s)=\var\left[\dot{G}(\s)\right]$,
where $\dot{G}(\s)$ is the vector of partial derivatives of $G$,
we can write (cf. e.g. \citet{Taylor2006}) 
\begin{equation}
E\left[\chi(A_{a}(G)\cap B)\right]=\sum_{d=0}^{N}\mathcal{L}_{d}(B,\Lambda)\rho_{d}(a),\label{eq:EEC}
\end{equation}
with $\mathcal{L}_{d}$ the $d$-th order \emph{Lipschitz-Killing
curvature (LKC)} (see e.g. \citet{Taylor2006} for details
on these quantities) and known functions $\rho_{d}(a)$. We can use
this to obtain CoPE sets: If the Assumptions \eqref{assu:CLT_and_mu_muhat}
are satisfied then Theorem \eqref{thm:inclusion} implies in conjunction
with \eqref{eq:EEC-approx} and\eqref{eq:EEC} that 
\begin{align}
\lim_{n\ra\infty}P\left[\wh A_{c}^{+}\subset A_{c}\subset\wh A_{c}^{-}\right]&=  1-P\left[\sup_{\s\in\partial A_c}\left|G(\s)\right|\geq a\right]\nonumber\\
&=1-\sum_{d=0}^{N-1}\mathcal{L}_{d}(\partial A_{c},\Lambda)\rho_{d}(a)+\mathcal{O}(\exp(-a^{2}/2)).\label{eq:inclusion_EEC}
\end{align}
To use \eqref{eq:inclusion_EEC}, the problem amounts to estimating
the LKCs of the boundary $\partial A_{c}$. \citet{Taylor2007}
propose a method to estimate the LKCs based on a finite number of
realizations of $G$. Applying this method requires a triangulation
of the plug-in estimate $\text{\ensuremath{\partial}}\wh A_{c}$ of
the boundary $\partial A_{c}$.

While the triangulation is challenging, yet feasible, \citet{Taylor2007}
prove the validity of their method only when applied to realizations
of the Gaussian field $G$. In our application, however, we only have
realizations of a generally non-Gaussian field (the residuals in of
the linear model, cf. Section \ref{sec:Application-to-climate}) with
asymptotically the same covariance as $G$. For completeness, we compare this method to the multiplier bootstrap method in the simulations section below.

\subsection{\label{sub:Algorithm}Algorithm}

Combining the results of the previous sections we can give the exact
procedure for obtaining CoPE sets for the parameters of the linear
model $\Y(\s)=\X\b(\s)+\e(\s)$. 
\begin{lyxalgorithm}
\label{Algorithm}Given a design matrix $\X$ and observations $\Y(\s)$
following the linear model (\ref{eq:glm}). If Assumptions \ref{ass:noise-assumption-1}
hold, the following yields CoPE sets for $b_{1}(\s)$. 
\begin{enumerate}
\item Compute the estimate $\wh{\b}(\s)=\left(\X^{T}\X\right)^{-1}\X^{T}\Y(\s)$
and the corresponding residuals $\mathbf{R}(\s)=\Y(\s)-\X\wh{\b}(\s)$.
With the empirical variance $\wh{\sigma}^{2}(\s)= n^{-1}\sum_{j=1}^{n}R_{j}^{2}(\s)$
compute the normalized residuals $\tilde{\mathbf{R}}(\s)=\wh{\sigma}(\s)^{-1}\mathbf{R}(\s)$. 
\item Determine $a$ such that approximately $P\left[\sup_{\s\in S}\left|G(\s)\right|\geq a\right]\leq\alpha$.
For example, use the multiplier bootstrap procedure presented in Section
\ref{sub:multiplier_bootstrap} with the residuals $\left\{ \tilde{R}_{j}(\s)\right\} _{j=1}^{n}$
to generate i.i.d. copies of a Gaussian $\tilde{G}$ field with covariance
structure given by the sample covariance. With these, determine $a$
such that $P\left[\sup_{\s\in S}\left|\tilde{G}(\s)\right|\geq a\right]\leq\alpha$. 
\item Obtain the nested CoPE sets defined in equation \eqref{eq:cope_sets_glm} 
\end{enumerate}
\end{lyxalgorithm}

\section{\label{sec:Simulations}Simulations}

This section includes some artificial simulations to show that the
proposed methods provide approximately the right coverage in practical
non-asymptotic situations with non-smooth, non-stationary and non-Gaussian
noise. We will describe ways to obtain  error fields with these
properties. Our objective in the design of the error fields described
below is not to imitate the data but to introduce non-stationarity
and non-Gaussianity in a transparent and reproducible way, showing
the full potential of the method. In fact, the error field that we
encounter in the data (cf. Section \eqref{sec:Application-to-climate})
is better behaved as far as smoothness and stationarity
are concerned than the artificial fields we investigate here.

\subsection{\label{sub:Setup}Setup}

As a simple instance of the general linear model (\ref{eq:glm}) we
consider the signal plus noise model 
\[
y_{j}(\s)=\mu(\s)+\ve_{j}(\s),\quad j=1,\dots,n
\]
with non-stationary and non-Gaussian noise $\ve(\s)$ over a square
region $S$ of size $10\times 10$ consisting of $64\times64$
square pixels. We will consider three different noise fields and we
describe in the following how to obtain a realization of each.
\begin{description}
\item [Noise 1] In the upper half of $S$, each pixel is assigned the value of
a standard normal random variable, all of which are independent. In
the lower half, the pixels are grouped together in blocks of 4 by 4
pixels and each block is assigned the value of a standard normal,
again all independent(cf. Figure \ref{fig:noise1_presmooth}). Finally, the entire picture is convolved with a Gaussian kernel with bandwidth one and all values are multiplied by a scaling factor of $50$.
\item[Noise 2] Identical to \textsf{Noise 1} except the image is smoothed by a Laplace kernel with bandwidth one instead of a Gaussian and the scaling factor is $100$.
\item[Noise 3] Each pixel in the upper half is assigned the value of a Laplace distributed random variable with mean zero and variance two. In the lower half, pixels are assigned the values of independent Student $t$-distributed random variables with $10$ degrees of freedom. The entire picture is convolved with a Gaussian kernel of bandwidth one and multiplied by a scaling factor of $25$.
\end{description}

The noise fields \textsf{Noise 1-3} are intentionally designed to have non-homogeneous variance and scaling factors are chosen ad-hoc such that all three fields can be conveniently displayed on a common scale.
	
The signal $\mu$ is a linear combination of three Gaussians.
Figure \ref{fig:mu_noise} shows the signal $\mu$. Each one realization
of the three noise fields is shown in Figure \ref{fig:noise}.

We controlled the probability of coverage at the level $1-\alpha=0.9$
using Theorem \ref{thm:inclusion}. The estimator for $\mu$ here
is the mean $\wh{\mu}_{n}(\s)=n^{-1}\sum_{j=1}^{n}y_{j}(\s)$ and
the thresholds for the CoPE sets are obtained using Algorithm \ref{Algorithm}
with $||\mathbf{v}||_{2}^{-1}\pi_{n}^{-\nicefrac{1}{2}}=\sqrt{n}$.
The threshold $a$ was computed using the multiplier bootstrap procedure
proposed in Section \ref{sub:multiplier_bootstrap} using either the
true boundary $\partial A_{c}$ or the plug-in estimate $\partial\wh A_{c}$.
The results of our method using $\partial\wh A_{c}$ for each one
run with the three noise fields and sample sizes $n=60$, $n=120$
and $n=240$ are shown in Figure \ref{fig:res_toy}.

\begin{figure}
\centering{}
	\begin{tabular}{cc}
		\subfloat[\label{fig:mu_noise}The signal $\mu(\protect\s)$ with the true contour
		$\partial A_{c}(\mu)$ for $c=4/3$ in purple.]{\includegraphics[width=0.2\textwidth]{./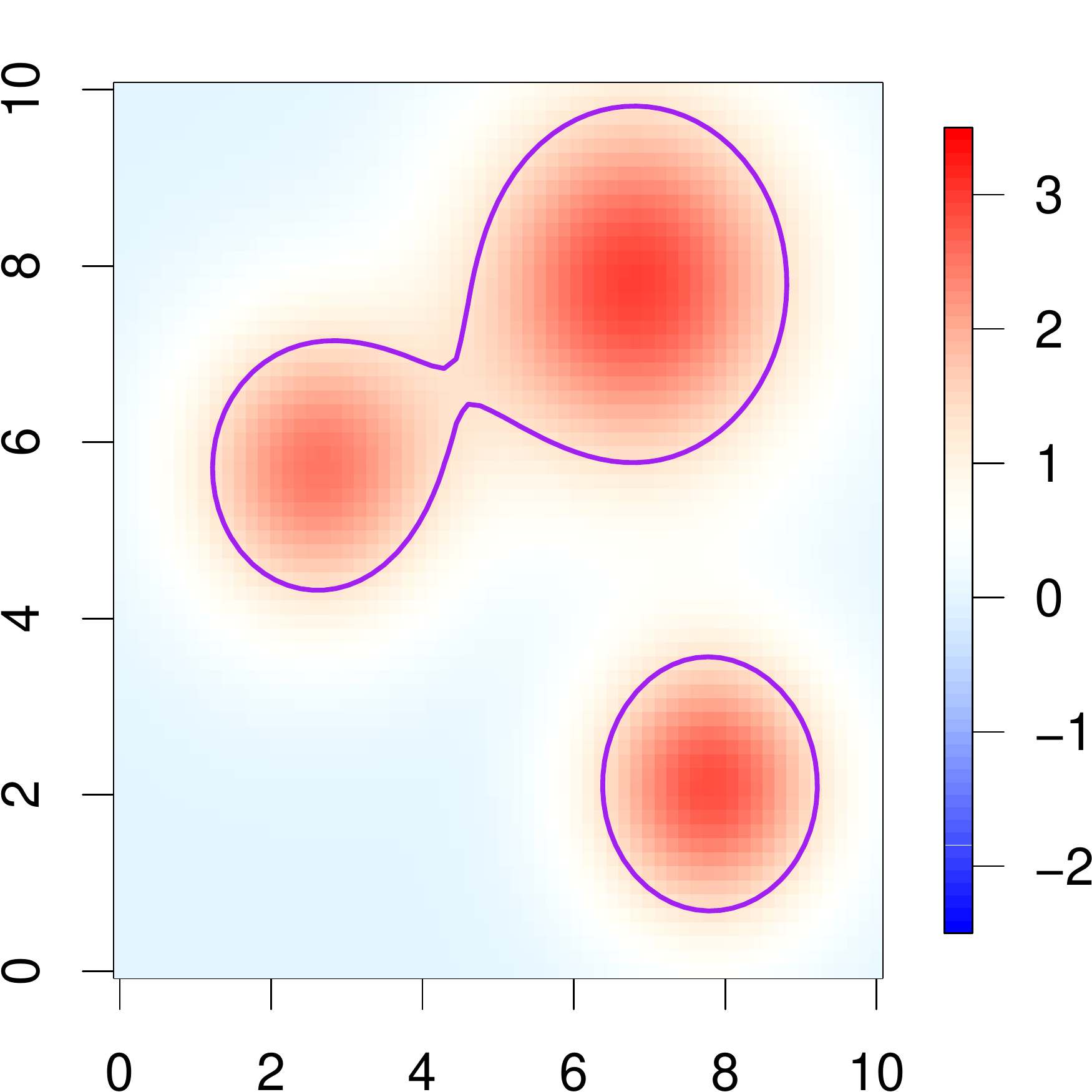}} &
		\subfloat[\label{fig:noise1_presmooth}A realization of \textsf{Noise 1} before smoothing with a Gaussian and scaling.]{\includegraphics[width=0.2\textwidth]{./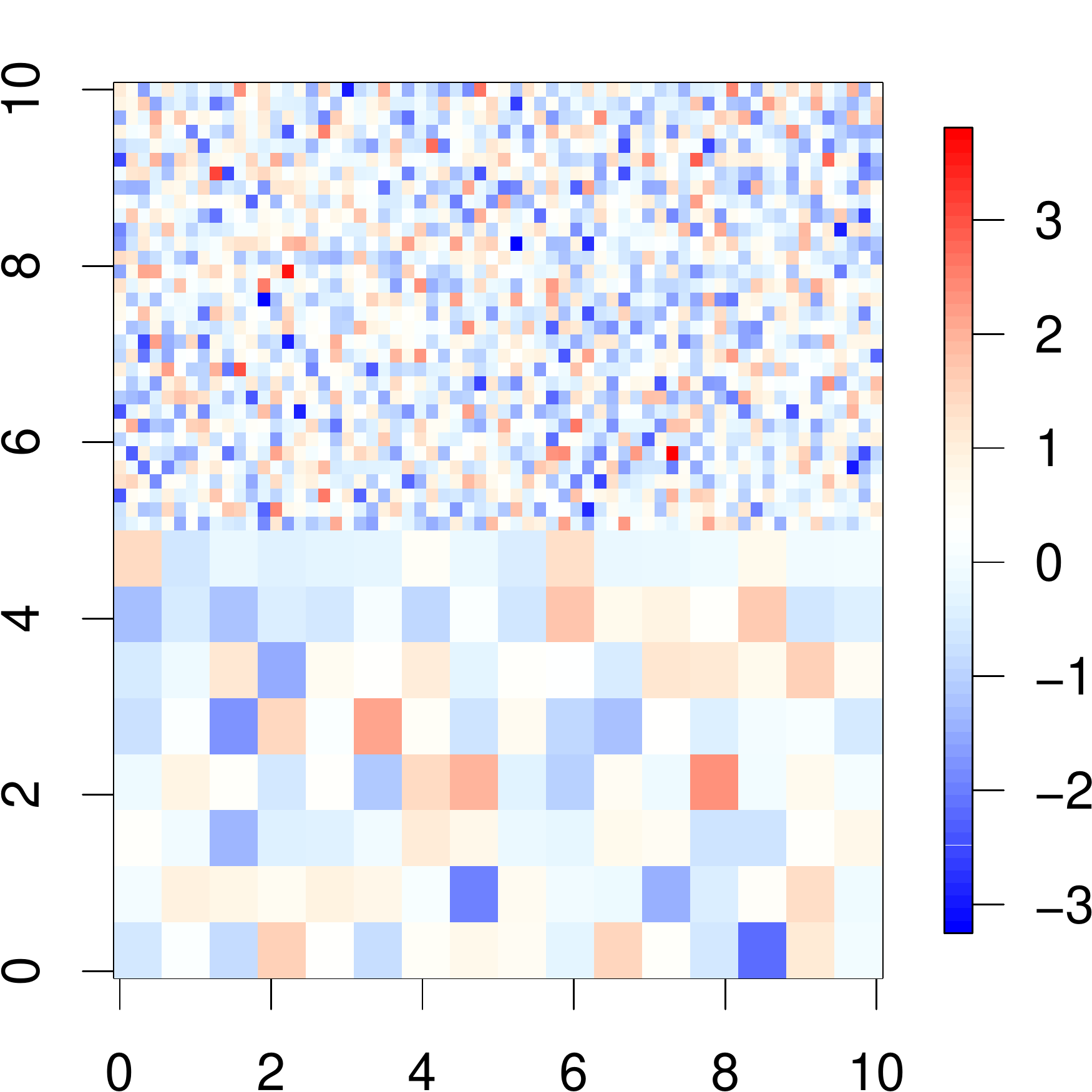}} 
	\end{tabular}
	\begin{tabular}{ccc}
		\subfloat[\textsf{Noise 1}. Nonstationary Gaussian noise smoothed with a Gaussian kernel.]{\includegraphics[width=0.2\textwidth]{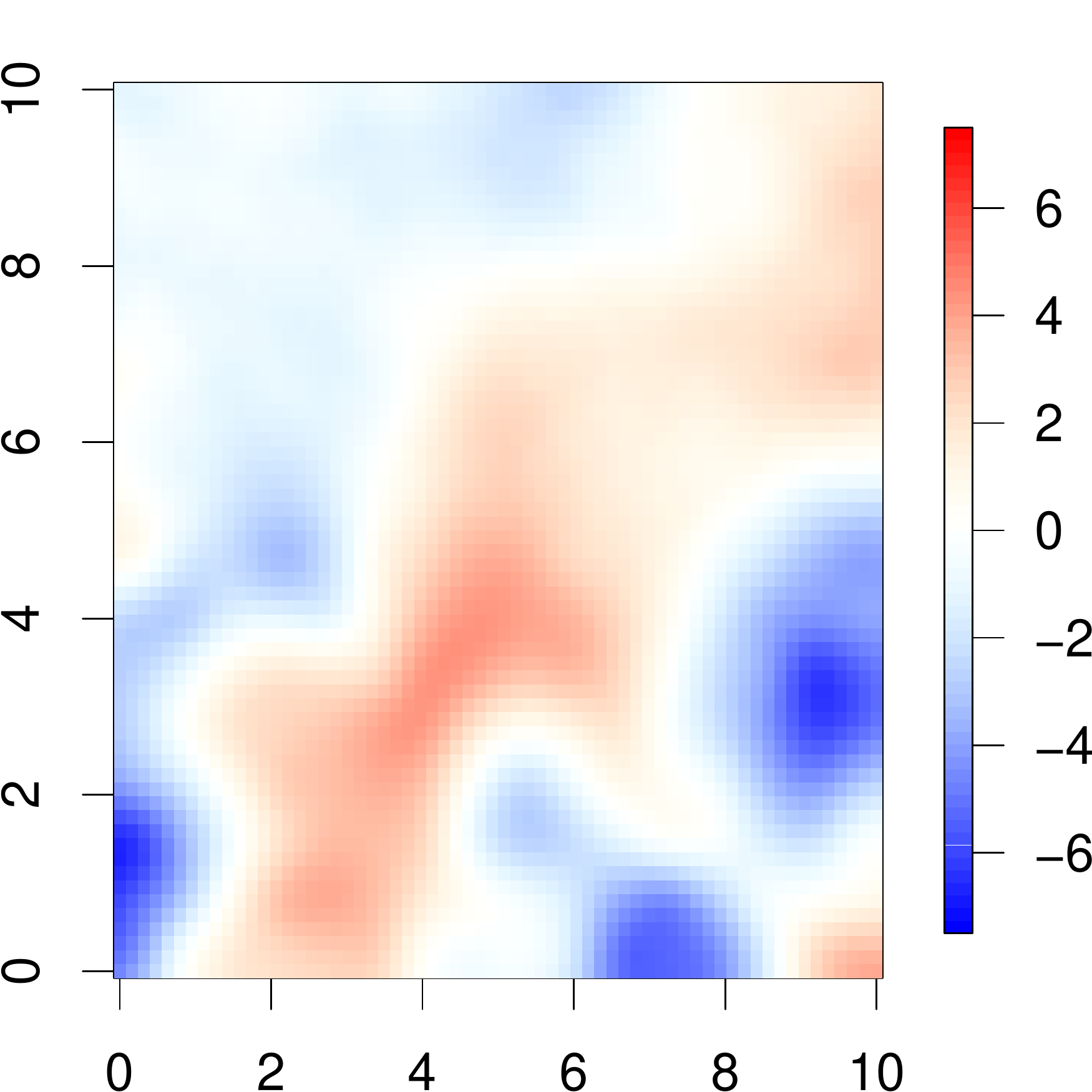}} & 
		\subfloat[\textsf{Noise 2}. Nonstationary Gaussian noise smoothed with a Laplace kernel.]{\includegraphics[width=0.2\textwidth]{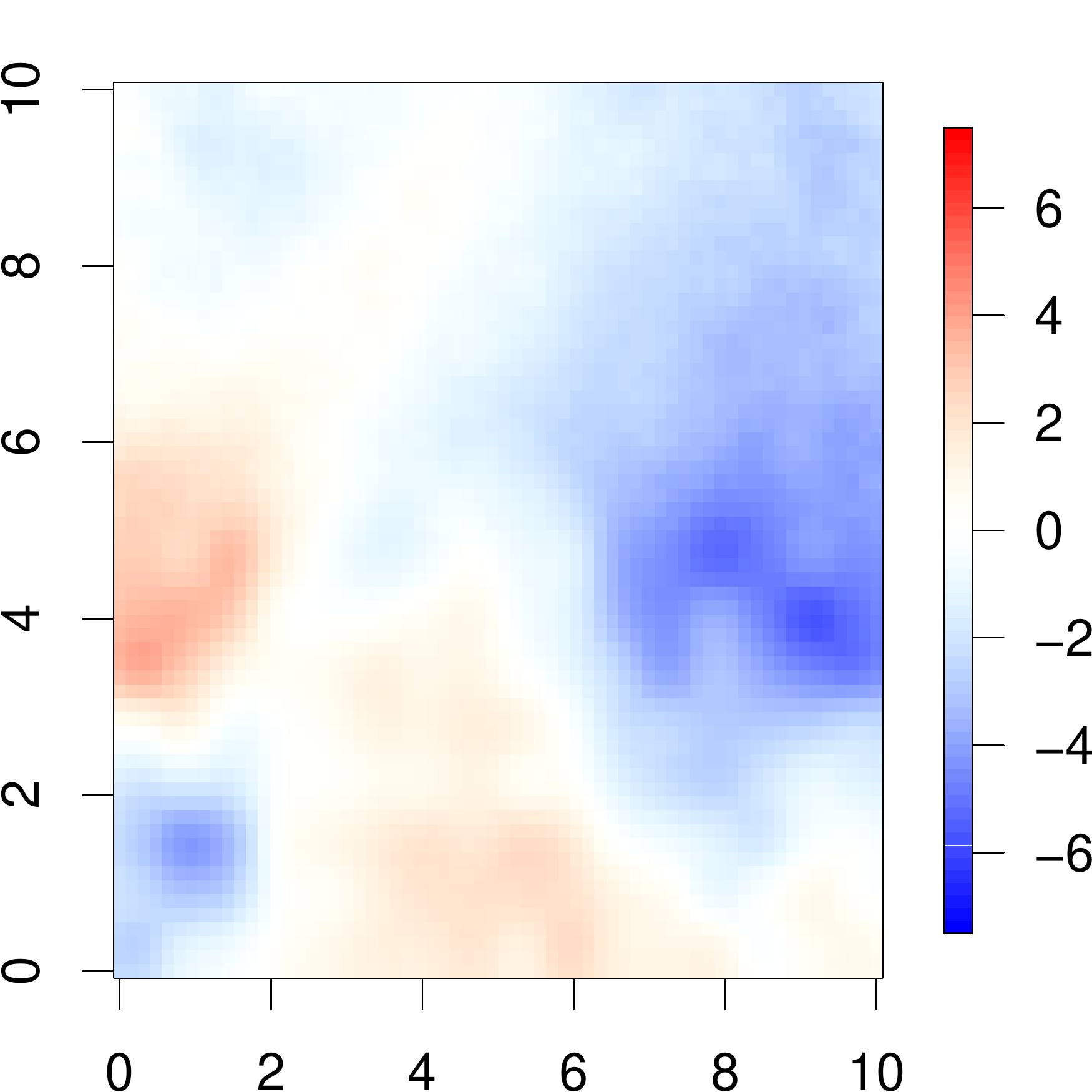}} & 
		\subfloat[\textsf{Noise 3}. Laplace and Student-$t$ distributed noise smoothed with a Gaussian
		kernel.]{\includegraphics[width=0.2\textwidth]{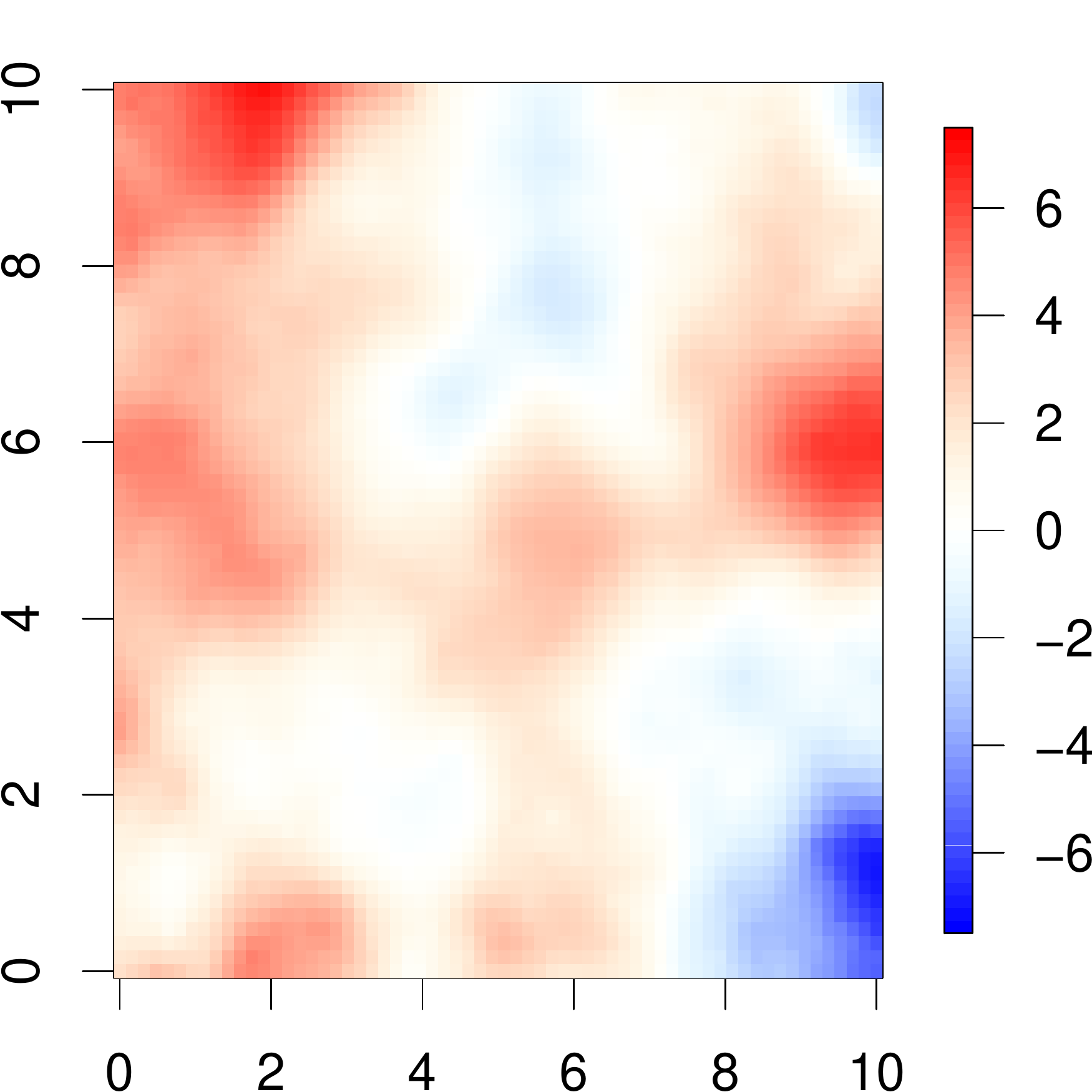}}
	\end{tabular}
\protect\caption{\label{fig:noise}The signal and noise in the toy example.}
\end{figure}

\begin{figure}
\begin{tabular}{cccc}
	& $n=60$ & $n=120$ & $n=240$\\
\begin{rotate}{90}\quad\textsf{Noise 1} \end{rotate}&\includegraphics[width=0.25\textwidth]{./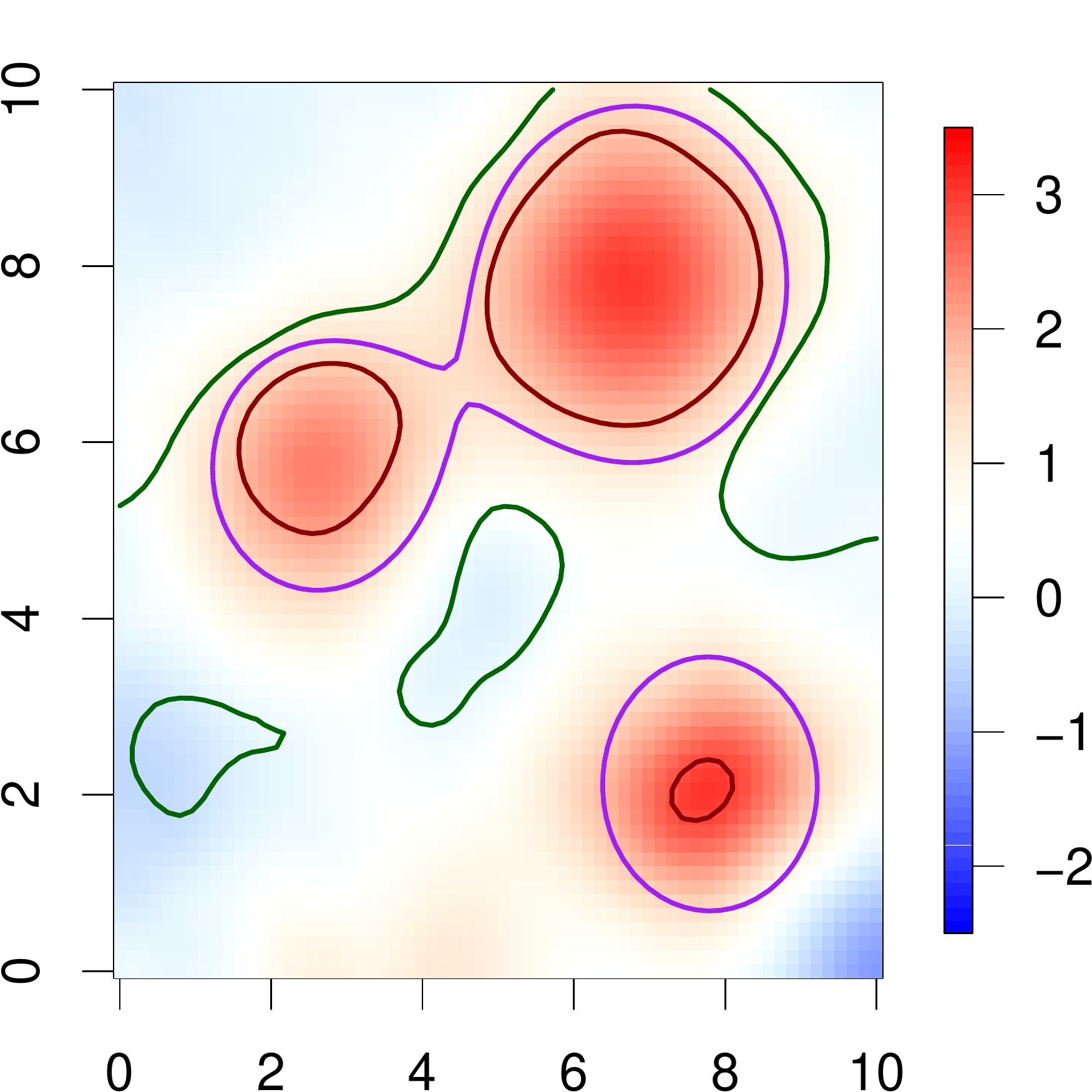}	& \includegraphics[width=0.25\textwidth]{./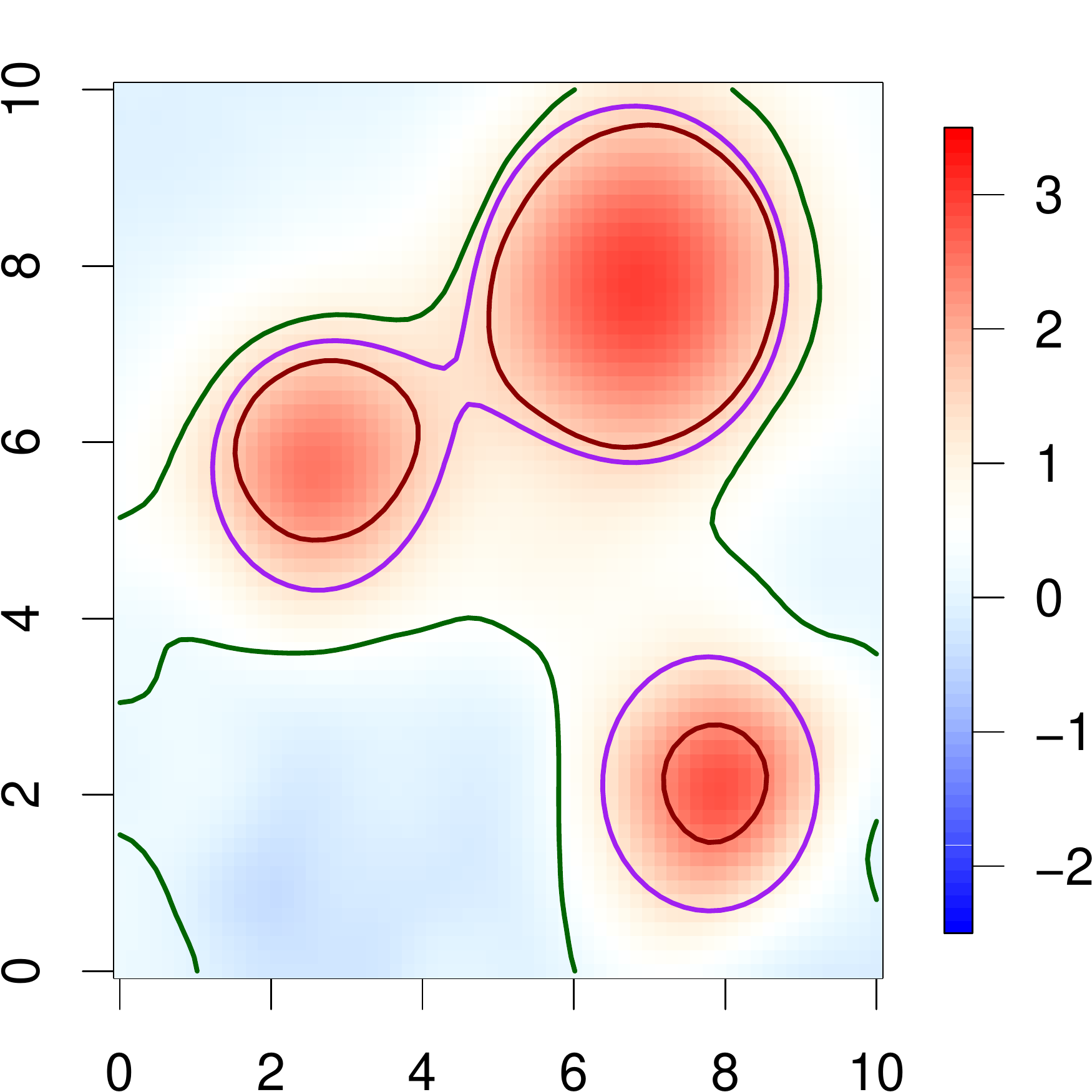} & \includegraphics[width=0.25\textwidth]{./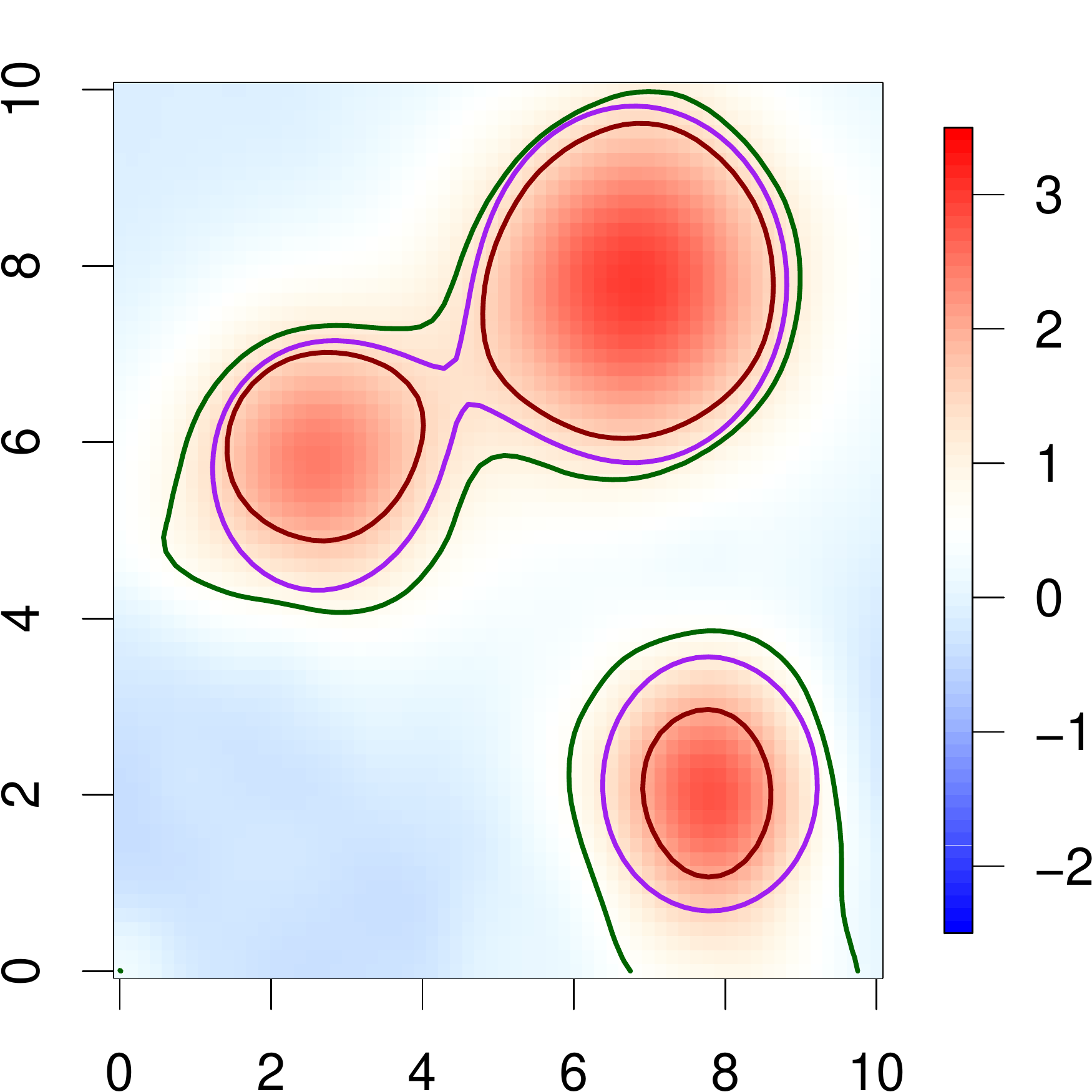} \\ 
\begin{rotate}{90}\quad \textsf{Noise 2} \end{rotate}&\includegraphics[width=0.25\textwidth]{./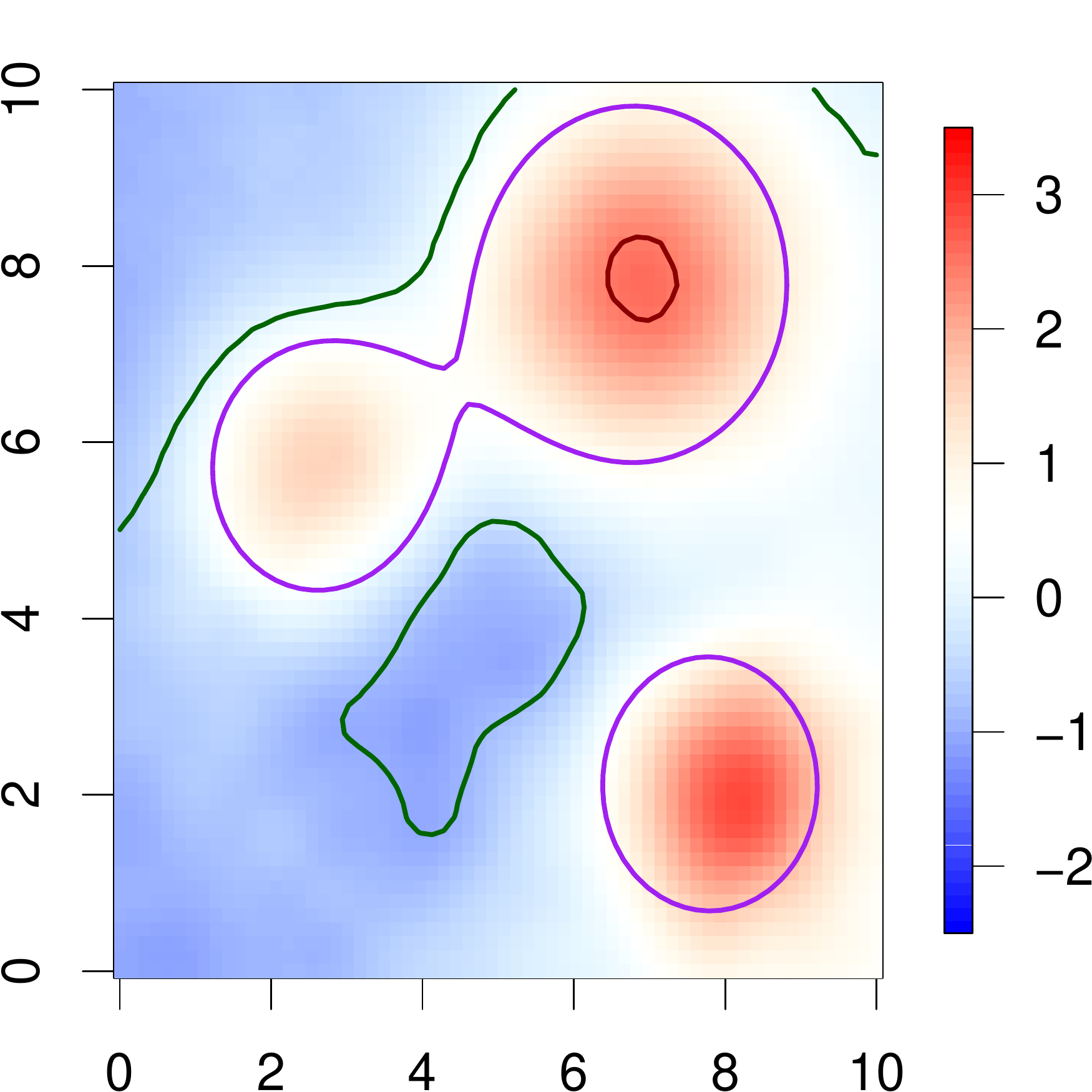}	& \includegraphics[width=0.25\textwidth]{./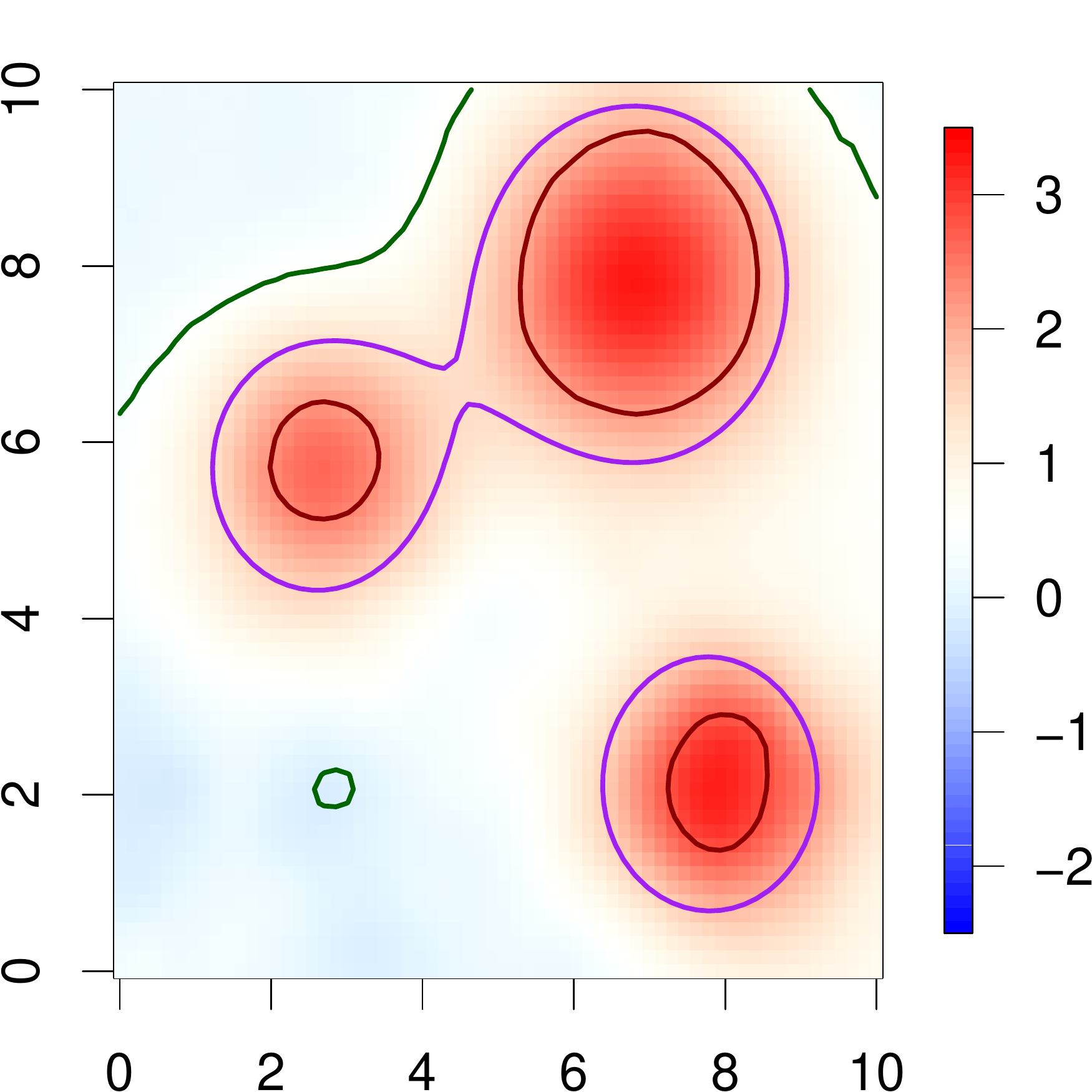} & \includegraphics[width=0.25\textwidth]{./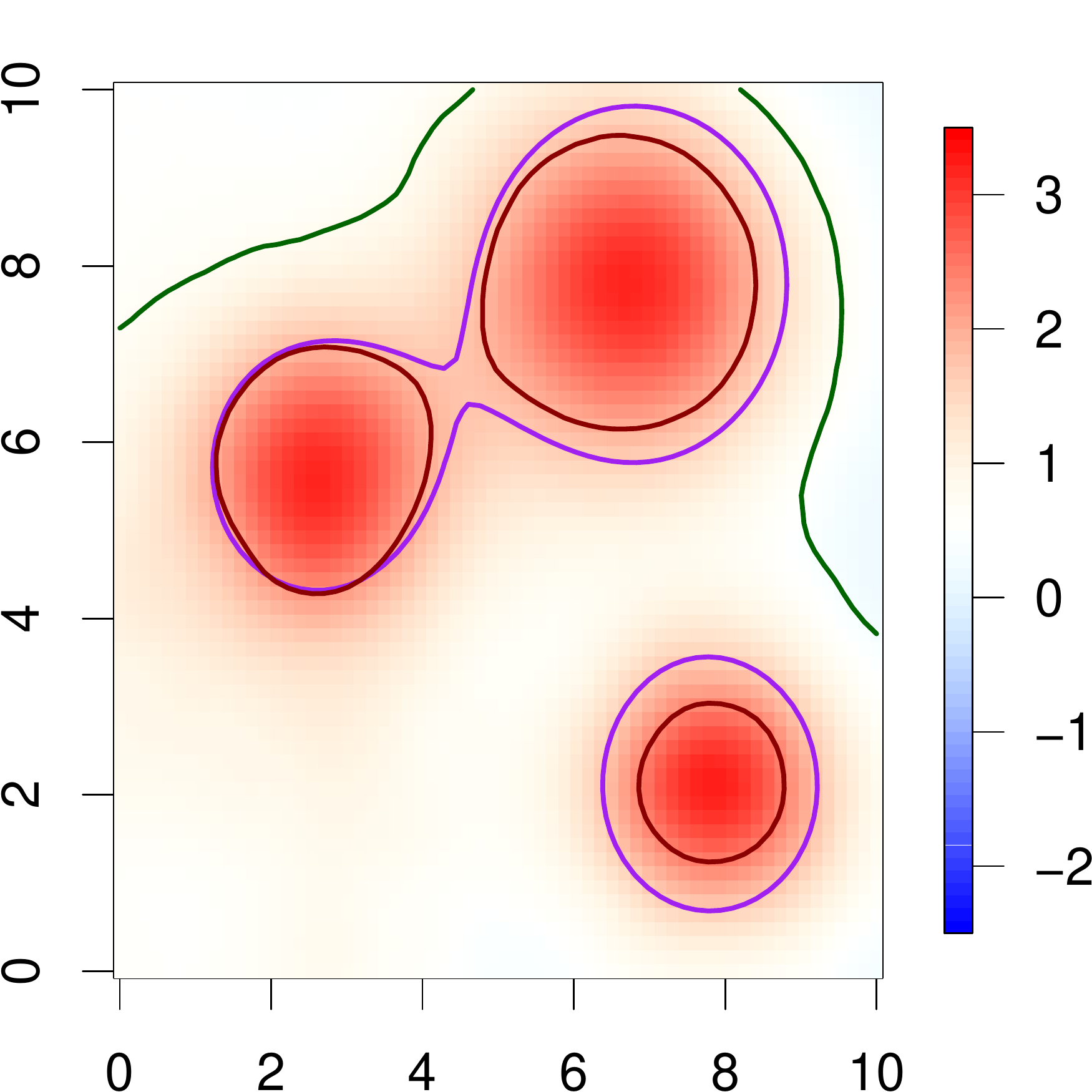} \\ 
\begin{rotate}{90}\quad \textsf{Noise 3} \end{rotate} &\includegraphics[width=0.25\textwidth]{./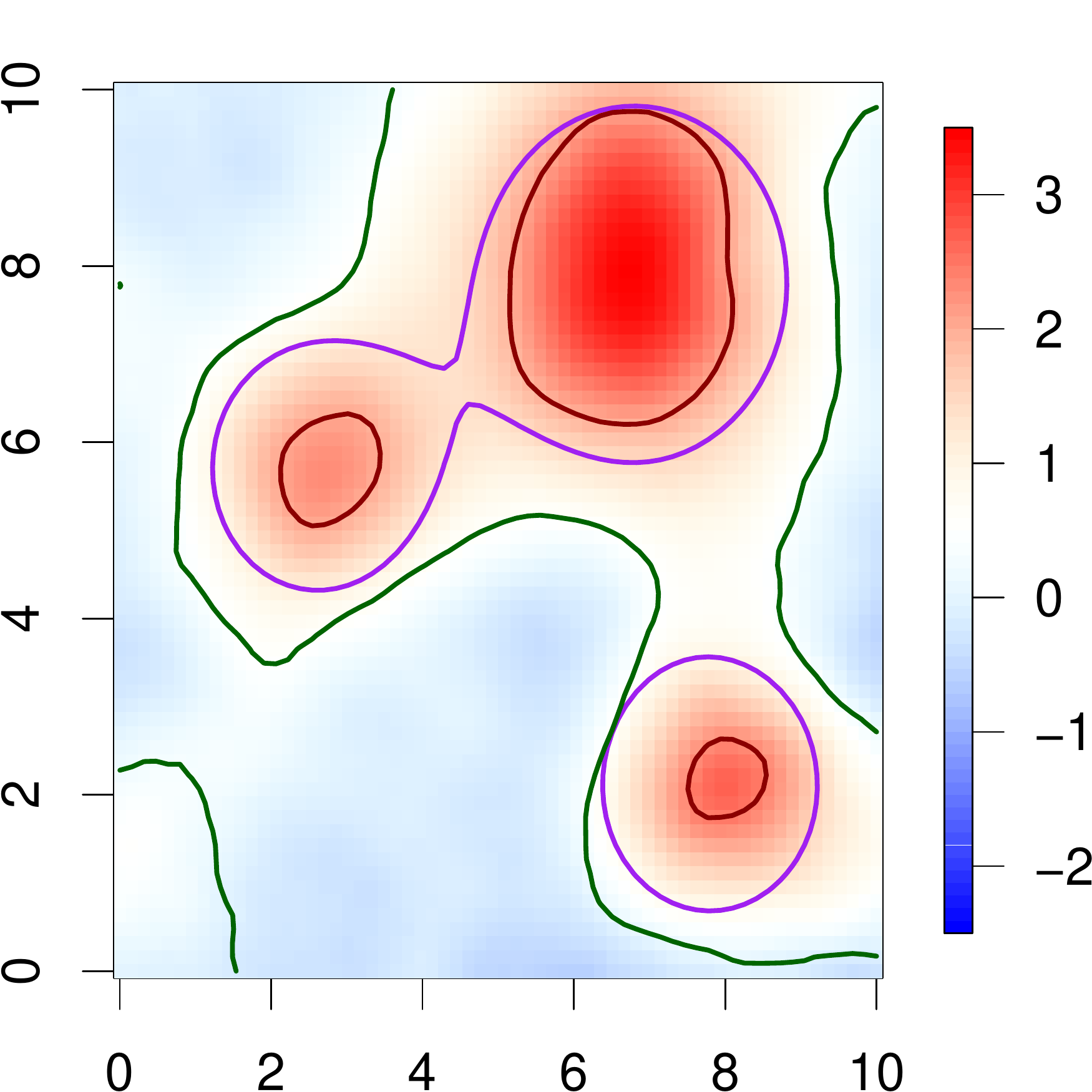}	& \includegraphics[width=0.25\textwidth]{./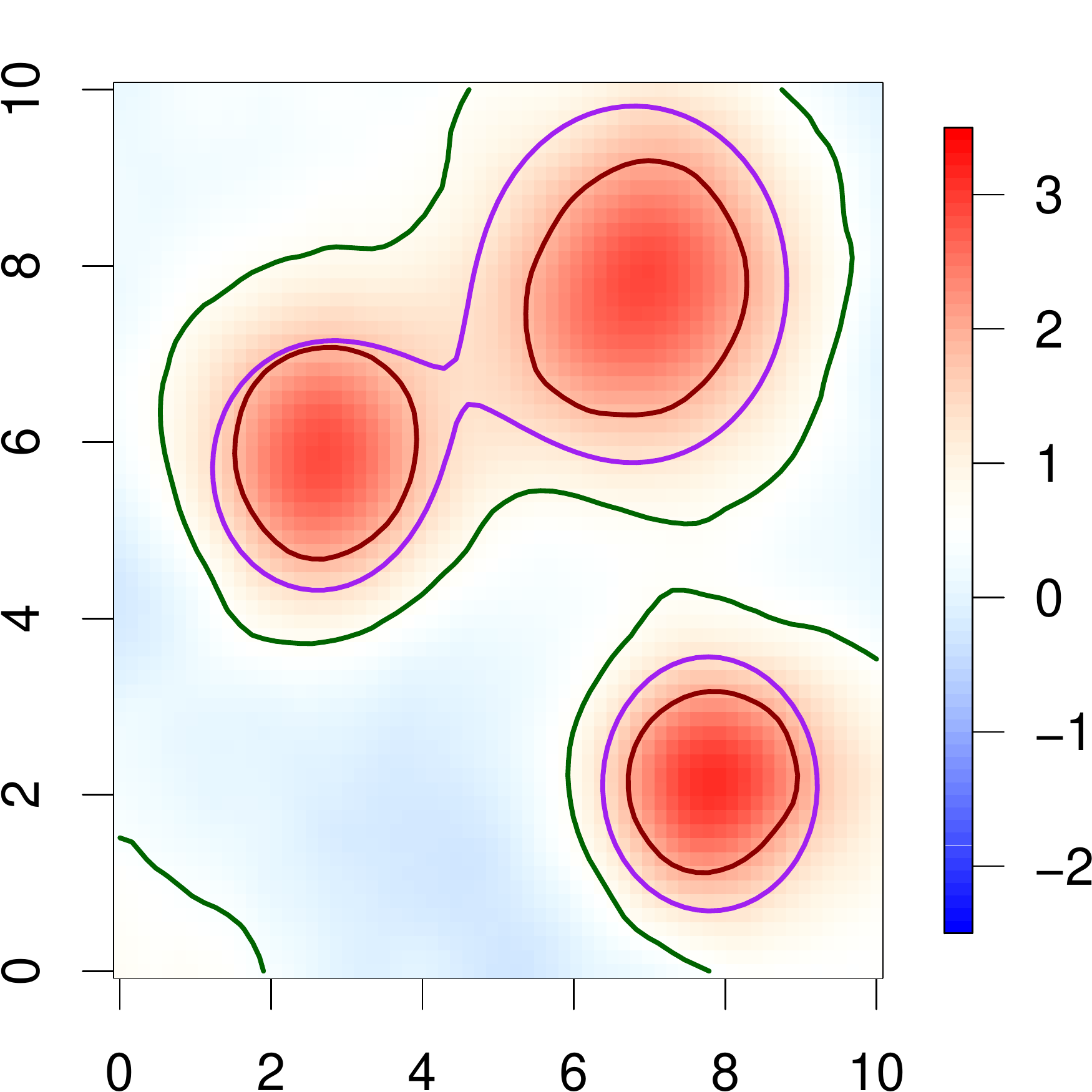} & \includegraphics[width=0.25\textwidth]{./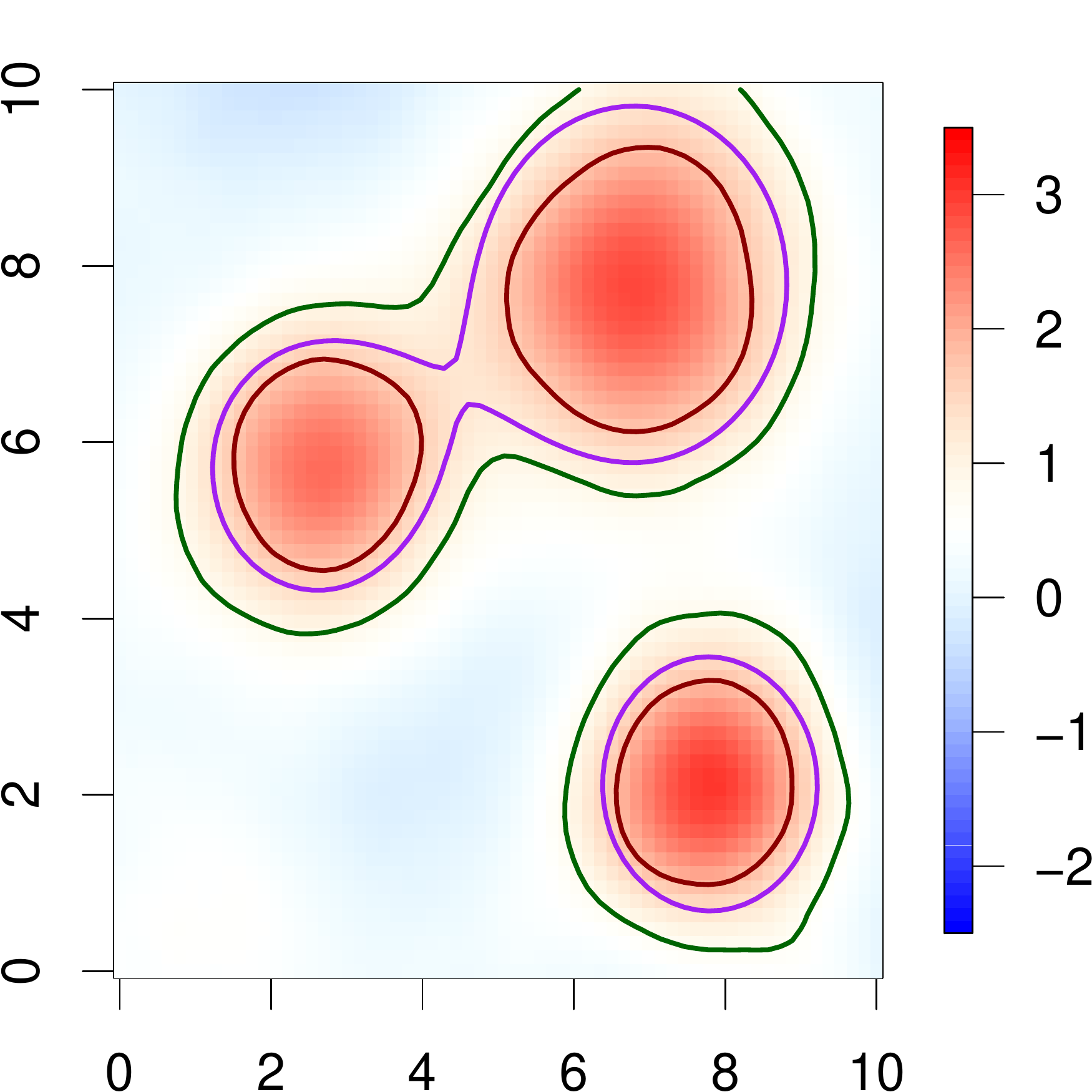} \\
\end{tabular} \centering{}\protect\caption{\label{fig:res_toy} The output of our method for the three noise
fields described above (corresponding to rows) and for sample sizes
$n=60,120,240$ (corresponding to columns) with the target function
$\mu(\protect\s)$ shown in Figure \ref{fig:mu_noise}. In all pictures
we show a heat map of the estimator $\protect\wh{\mu}_{n}(\protect\s)$,
the boundary of $A_{c}(\mu)$ in purple as well as the boundaries
of $\protect\wh A_{c}^{+}$ and $\protect\wh A_{c}^{-}$ in red and
green, respectively. The threshold $a$ was obtained according to
Theorem \ref{thm:inclusion} to guarantee inclusion $\protect\wh A_{c}^{+}\subset A_{c}\subset\protect\wh A_{c}^{-}$
with confidence $1-\alpha=0.9$.}
\end{figure}

\subsection{Performance of CoPE sets}

We analyzed the performance of our method on each $5000$ runs of
the toy examples shown in Section \ref{sub:Setup} with sample sizes
$n=60$, $n=120$ and $n=240$. Table \ref{table:coverage} shows
the percentage of trials in which coverage $\wh A_{c}^{+}\subset A_{c}\subset\wh A_{c}^{-}$
was achieved, if either the true boundary $\partial A_{c}$ or the
plug-in estimator $\partial\wh A_{c}$ was used to determine the threshold.

We see that the empirical coverage is smaller than the nominal level in all
experiments but approaches the nominal level  reasonably fast as the sample size
increases. In fact, when $n = 240$, the simulation confidence interval cover the
nominal level of 90\%, suggesting asymptotic unbiasedness. Comparing the two columns, we see that the non-asymptotic bias is not caused by the lack of knowledge of the true boundary. It may be a consequence of the bootstrap procedure instead.

\paragraph{Computational performance} As already noted in Section
\ref{sub:multiplier_bootstrap}, the multiplier bootstrap allows for a very fast
computation of CoPE sets. In the simulations, the CoPE sets for a sample of size
$n=240$, each on a grid of $64\times 64=4096$ locations could be computed in
less than two seconds on a standard laptop.

\begin{table}
\centering{}%
\begin{tabular}{|rrr|}
\hline 
 & $\partial A_{c}$ & $\partial\wh A_{c}$\tabularnewline
\hline 
Noise field 1 &  & \tabularnewline
$n=60$  & $86.20\%\pm0.49\%$  & $86.16\%\pm0.49\%$\tabularnewline
$120$  & $88.62\%\pm0.45\%$  & $88.74\%\pm0.45\%$\tabularnewline
$240$ & $88.94\%\pm0.44\%$ & $88.9\%\pm0.44\%$\tabularnewline
Noise field 2 &  & \tabularnewline
$n=60$  & $87.22\%\pm0.47\%$ & $88.74\%\pm0.45\%$\tabularnewline
$120$  & $89.22\%\pm0.44\%$ & $89.26\%\pm0.44\%$\tabularnewline
$240$ & $89.76\%\pm0.43\%$ & $89.70\%\pm0.43\%$\tabularnewline
Noise field 3 &  & \tabularnewline
$n=60$  & $86.44\%\pm0.48\%$ & $86.62\%\pm0.48\%$\tabularnewline
$120$  & $88.60\%\pm0.44\%$ & $88.78\%\pm0.45\%$\tabularnewline
$240$ & $89.76\%\pm0.43\%$ & $89.94\%\pm0.43\%$\tabularnewline
\hline 
\end{tabular}\protect\caption{\label{table:coverage} Percentage of trials in which $\protect\wh A_{c}^{+}\subset A_{c}\subset\protect\wh A_{c}^{-}$.
The nominal coverage probability is $90\%$.}
\end{table}

\subsection{Comparison with Taylor's Method}

In this Section we compare  the multiplier bootstrap with the method proposed by \cite{Taylor2007}, as described in Section \ref{sub:GKF}. We use both methods to approximate the distribution of $\sup_{\s\in\partial A_c}|\ve(\s)/\sigma(\s)|$, where $\ve(\s)$ is distributed according to \textsf{Noise 1} (see Section \ref{sub:Setup} above), $\sigma^2(\s)=\var[\ve(\s)]$ and $\partial A_c$ is the contour $A_c (\mu)$ of the function $\mu$ shown in Figure \ref{fig:mu_noise} at level $c=4/3$.  The true cumulative density function for $\sup_{\s\in\partial A_c}|\ve(\s)/\sigma(\s)|$ and  its empirical approximations based on  the multiplier bootstrap and Taylor's method are shown in Figure \ref{fig:MBvsT}. The empirical cdfs are each based on a single i.i.d. sample  $\ve_1(\s),\dots,\ve_n(\s)$ for $n=10,30$ and $60$. For the multiplier bootstrap we generated $5,000$ bootstrap realizations. The true cdf was calculated empirically using 10,000 i.i.d. samples of $\epsilon(s)$.

Both methods give a remarkably good approximation of the true distribution of the supremum, particularly for sample sizes of $n=30$ and higher. However, while  Taylor's method only gives a valid approximation in the tail of the distribution, the multiplier bootstrap approximates all parts of the cdf.

\begin{figure}
	\begin{centering}
	\includegraphics[width=\textwidth]{./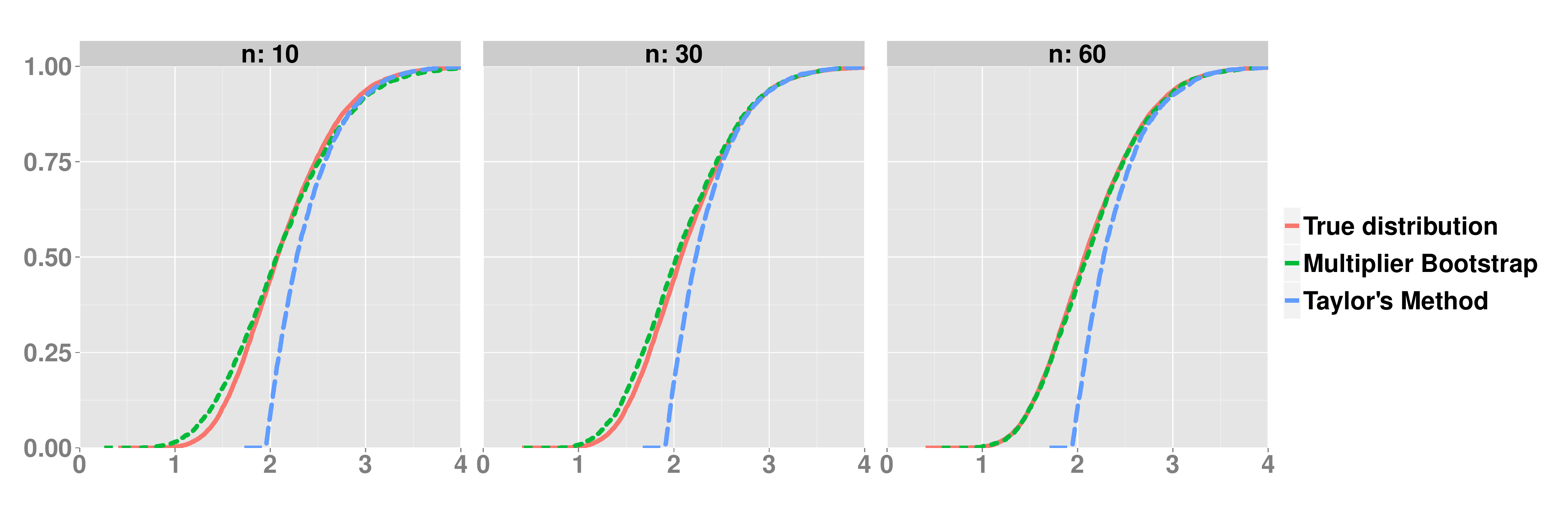}
	\protect\caption{\label{fig:MBvsT}The probability $P\left[\sup_{\s\in \partial A_c}|\ve(\s)/\sigma(\s)| \leq a \right]$ (the horizontal axis shows $a$) and approximations of it via the multiplier bootstrap or Taylor's method based on a sample of size $n$.  Here, the error field $\ve(\s)$ has the distribution described in \textsf{Noise 1} and $A_c = A_c(\mu)$ for the function $\mu$ shown in Figure \ref{fig:mu_noise}.}
	\end{centering}
\end{figure}

\section{\label{sec:Application-to-climate}Application to the climate data}

\subsection{Data setup}

In our application we have a total of $n=n^{(a)}+n^{(b)}$ observations,
the first $n^{(a)}$ observations are the 'past', the last $n^{(b)}$
are the 'future'. Within each period we model the change in mean temperature
linearly in time. More precisely we have 
\begin{align}
Y_{j} & (\s)=\Tmean^{(a)}(\s)+\Tslope^{(a)}(\s)t_{j}^{(a)}+\ve_{j}(\s),\quad j=1,\dots,n^{(a)}\nonumber \\
Y_{j} & (\s)=\Tmean^{(b)}(\s)+\Tslope^{(b)}(\s)t_{j}^{(b)}+\ve_{j}(\s),\quad j=n^{(a)}+1,\dots,n^{(a)}+n^{(b)}.\label{eq:temp_model_plain}
\end{align}
Without loss of generality, we may assume that $\sum_{j=1}^{n^{(a)}}t_{j}^{(a)}=0$
and $\sum_{j=n^{(a)}+1}^{n^{(a)}+n^{(b)}}t_{j}^{(b)}=0$. We will
denote the covariance of the error field $\ve(\s)$ by $\cv(\s_{1},\s_{2})=\cov\left[\ve(\s_{1}),\ve(\s_{2})\right]$.
Our goal is to give CoPE sets for the excursion sets of the
difference $\Tmean^{(b)}(\s)-\Tmean^{(a)}(\s)$. Therefore, we define
the parameter vector and the design matrix 
\[
\left(\begin{array}{c}
b_{1}(\s)\\
b_{2}(\s)\\
b_{3}(\s)\\
b_{4}(\s)
\end{array}\right)=\left(\begin{array}{c}
\Tmean^{(b)}(\s)-\Tmean^{(a)}(\s)\\
\Tmean^{(a)}(\s)\\
\Tslope^{(a)}(\s)\\
\Tslope^{(b)}(\s)
\end{array}\right),\quad\X=\left(\begin{array}{cccc}
0 & 1 & t_{1}^{(a)} & 0\\
\vdots & \vdots & \vdots & \vdots\\
0 & 1 & t_{n^{(a)}}^{(a)} & 0\\
1 & 1 & 0 & t_{n^{(a)}+1}^{(b)}\\
\vdots & \vdots & \vdots & \vdots\\
1 & 1 & 0 & t_{n^{(a)}+n^{(b)}}^{(b)}
\end{array}\right),
\]
to be able to rewrite \eqref{eq:temp_model_plain} as a general linear
model $\Y(\s)=\X\b(\s)+\e(\s).$ Our objective is now to formulate
Assumptions on the design and the noise under which we can apply Algorithm
\ref{Algorithm} to the data. This is done in the following.
\begin{assumption}
\label{ass:appl}Assume that 
\begin{enumerate}
	\item the parameter functions $\b$ are continuous and and the level set $\{\s:b_1(\s)=c\}$ is equal to $\partial A_c(b_1)$.
	\item the noise field $\ve(\s)$ has continuous sample paths with probability
	one and a centered unit variance Gaussian field with correlation
	function $\cv(\s_{1},\s_{2})$ also has continuous sample paths with probability
	one.
	\item the variance function $\sigma(\s)$ is continuous.
	\item there exist numbers $\delta,\beta>0$ and $\gamma\geq0$ such that
	the error field $\ve(\s)$ has the properties \textsf{\textup{N1}}-$\delta$
	and \textsf{\textup{N2}}-$(\gamma,\beta)$.
\item $n^{(a)}=n^{(b)}=\nicefrac{n}{2}$ and that both sets of design points
$t_{j}^{(a)}$ and $t_{j}^{(b)}$ are equally spaced (possibly with
different spacing for the periods $(a)$ and $(b)$). 
 
\end{enumerate}
\end{assumption}
The next final and central statement now asserts that these Assumptions
are indeed sufficient for Algorithm \ref{Algorithm} to be valid.
Its proof is a direct application of Theorem \ref{thm:weak-convergence}.
\begin{prop}
\label{prop:data_CLT}Under Model \eqref{eq:temp_model_plain} Assumptions
\ref{ass:appl} imply that Assumptions \ref{assu:CLT_and_mu_muhat}
hold for the target function $b_{1}(\s)$ and the estimator $\wh b_{1}(\s)$
with $\tau_{n}=2n^{-\nicefrac{1}{2}}$. In particular, 
\[
\frac{\sqrt{n}}{2\sigma(\s)}\left(\wh b_{1}(\s)-b_{1}(\s)\right)\ra G(\s),
\]
weakly, where $G$ is a mean zero unit variance Gaussian field with
correlation function $\cov\left[G(\s_{1}),G(\s_{2})\right]=\cv(\s_{1},\s_{2})$.
Consequently, Algorithm \ref{Algorithm} can be used to obtain CoPE
sets for the excursion sets of $b_{1}(\s)$.
\end{prop}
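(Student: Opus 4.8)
The plan is to verify that Assumptions \ref{ass:appl} imply Assumptions \ref{ass:noise-assumption-1} for the coordinate $b_1$, so that Theorem \ref{thm:weak-convergence} applies verbatim, and then to identify the normalizing constant appearing in part (b) of that theorem with $\tau_n=2n^{-\nicefrac12}$. Parts (a), (b) and (c) of Assumptions \ref{ass:appl} are literally parts (a), (b) and (c) of Assumptions \ref{ass:noise-assumption-1}, and the moment properties \textsf{N1}-$\delta$ and \textsf{N2}-$(\gamma,\beta)$ are supplied directly by \ref{ass:appl}(d). The only genuine content is therefore to derive the two coupled design conditions in parts (d) and (e) of Assumptions \ref{ass:noise-assumption-1} from the explicit design matrix $\X$ together with the equal-spacing hypothesis \ref{ass:appl}(e). (The continuity of $\wh b_1(\s)$ needed later for Assumption \ref{assu:CLT_and_mu_muhat} follows for free, since $\wh\b(\s)=(\X^T\X)^{-1}\X^T\Y(\s)$ is a fixed linear image of $\Y(\s)=\X\b(\s)+\e(\s)$, which has continuous sample paths by \ref{ass:appl}(a),(b).)

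First I would compute $\X^T\X$ explicitly. Writing $m=n^{(a)}=n^{(b)}=n/2$ and using the centering constraints $\sum_j t_j^{(a)}=\sum_j t_j^{(b)}=0$, every cross term between the two ``time'' columns and the intercept/indicator columns vanishes, so $\X^T\X$ is block diagonal: a $2\times2$ block $m\left(\begin{smallmatrix}1&1\\1&2\end{smallmatrix}\right)$ from the first two columns, together with the scalars $S_a=\sum_j(t_j^{(a)})^2$ and $S_b=\sum_j(t_j^{(b)})^2$ on the remaining diagonal. Inverting the $2\times2$ block (its determinant is $m^2$) gives $\pi_n=[(\X^T\X)^{-1}]_{11}=2/m=4/n$, hence $\pi_n^{\nicefrac12}=2/\sqrt n$.

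The heart of the proof is bounding $\norm{\X(\X^T\X)^{-\nicefrac12}}_\infty$, and here I would work through the diagonal of the hat matrix $H=\X(\X^T\X)^{-1}\X^T$, since the $i$-th row of $\X(\X^T\X)^{-\nicefrac12}$ has squared Euclidean norm exactly $h_{ii}=\x_i^T(\X^T\X)^{-1}\x_i$ (with $\x_i$ the $i$-th row of $\X$). Using the block structure, a short calculation gives $h_{ii}=1/m+(t_i^{(a)})^2/S_a$ for a row $i$ in period $(a)$ and $h_{ii}=1/m+(t_i^{(b)})^2/S_b$ in period $(b)$. This is exactly where equal spacing enters: for $m$ equally spaced, centered design points $S_a$ is of order $m^3$ while $\max_i(t_i^{(a)})^2$ is of order $m^2$, so $\max_i(t_i^{(a)})^2/S_a=\Op(1/m)$, and likewise in period $(b)$; hence $\max_i h_{ii}=\Op(1/m)=\Op(1/n)$. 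Since each row of $\X(\X^T\X)^{-\nicefrac12}$ has only $p$ entries, its $\ell_1$-norm is at most $\sqrt p$ times its $\ell_2$-norm, giving $\norm{\X(\X^T\X)^{-\nicefrac12}}_\infty\le\sqrt p\,\max_i h_{ii}^{\nicefrac12}=\Op(n^{-\nicefrac12})$. I expect this uniform leverage bound to be the main obstacle: without equal spacing a single extreme design point could inflate one leverage and destroy the $\Op(n^{-\nicefrac12})$ rate.

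Feeding this rate into the design conditions is then immediate: $n\norm{\X(\X^T\X)^{-\nicefrac12}}_\infty^{2+\delta}=\Op(n^{-\delta/2})\ra0$ because $\delta>0$, verifying \ref{ass:noise-assumption-1}(d), and $n\norm{\X(\X^T\X)^{-\nicefrac12}}_\infty^{2+\gamma}=\Op(n^{-\gamma/2})$ stays bounded because $\gamma\ge0$, verifying \ref{ass:noise-assumption-1}(e). Thus Theorem \ref{thm:weak-convergence} applies, and it remains to identify the normalizer $\norm{\mathbf v}_2\pi_n^{\nicefrac12}$ of part (b). The direction vector $\mathbf v$ exists because $\pi_n^{-\nicefrac12}e_1^T(\X^T\X)^{-\nicefrac12}$ is in fact constant in $n$ (the $m$-dependence cancels between $\pi_n^{-\nicefrac12}$ and the block inverse square root, and $e_1$ annihilates the $t$-dependent entries), while in general $\norm{e_1^T(\X^T\X)^{-\nicefrac12}}_2^2=[(\X^T\X)^{-1}]_{11}=\pi_n$, so $\norm{\mathbf v}_2=\lim_n\pi_n^{-\nicefrac12}\pi_n^{\nicefrac12}=1$. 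Hence the normalizer equals $\pi_n^{\nicefrac12}=2/\sqrt n=\tau_n$, which yields the displayed weak convergence of $\tfrac{\sqrt n}{2\sigma(\s)}(\wh b_1(\s)-b_1(\s))$ to a mean-zero, unit-variance Gaussian field $G$ with correlation $\cv$ via Theorem \ref{thm:weak-convergence}(a),(b). Together with the continuity statements already checked, this shows Assumptions \ref{assu:CLT_and_mu_muhat} hold for $b_1$, and part (c) of Theorem \ref{thm:weak-convergence} (equivalently Theorem \ref{thm:inclusion}) then justifies Algorithm \ref{Algorithm}.
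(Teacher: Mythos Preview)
Your argument is correct and reaches the same conclusion as the paper, but by a cleaner route in two places. The paper computes $(\X^{T}\X)^{-\nicefrac12}$ explicitly and then bounds the row sums of $\X(\X^{T}\X)^{-\nicefrac12}$ entry by entry, obtaining $\norm{\X(\X^{T}\X)^{-\nicefrac12}}_\infty\le 2n^{-\nicefrac12}+\max_j|t_j^{(a)}|/\sqrt{\omega^{(a)}}+\max_j|t_j^{(b)}|/\sqrt{\omega^{(b)}}$ and invoking equal spacing from there. You instead identify the squared $\ell_2$-norm of each row with the leverage $h_{ii}=\x_i^{T}(\X^{T}\X)^{-1}\x_i$, compute $h_{ii}=1/m+(t_i)^2/S$ directly from $(\X^{T}\X)^{-1}$ (which is easy), and convert to $\ell_1$ via Cauchy--Schwarz. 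This sidesteps the explicit square root entirely and makes transparent why equal spacing is exactly the uniform-leverage condition $\max_i h_{ii}=\mathcal{O}(1/n)$. Similarly, the paper verifies $\norm{\mathbf v}_2=1$ by plugging in the explicit $(\X^{T}\X)^{-\nicefrac12}$, whereas your identity $\norm{e_1^{T}(\X^{T}\X)^{-\nicefrac12}}_2^2=[(\X^{T}\X)^{-1}]_{11}=\pi_n$ shows this holds for \emph{any} design once the limit $\mathbf v$ exists. The paper's approach buys the exact constants without appeal to norm inequalities; yours buys generality and avoids a tedious matrix square root. One cosmetic point: the symbol $\Op$ in this paper is defined as $\mathrm{op}$, so your $\Op(n^{-\nicefrac12})$ would typeset misleadingly; use $\mathcal{O}(n^{-\nicefrac12})$ instead.
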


\subsection{Data analysis}

The results for the climate data described in the Introduction,  shown in Figure \ref{data_res}, correspond to CoPE sets for $b_{1}(\s)=\Tmean^{(b)}(\s)-\Tmean^{(a)}(\s)$
obtained via Algorithm \ref{Algorithm} with
$||\mathbf{v}||_{2}\pi_{n}^{\nicefrac{1}{2}}\sigma(\s)^{-1}=2/\sqrt{n}$
and $n=29+29=58$. The target level is $c=2^{\circ}C$ and nominal coverage probability is fixed at $1-\alpha=0.9$. 

For the mean summer temperature, it may be stated with $90\%$ confidence that the Rocky Mountains and the Sierra Madre Occidental mountains of Mexico are at risk of exhibiting a warming of $2^\circ C$ or more in the given time period, while  the Florida Peninsula, parts of the Mexican Gulf, large parts of the Canadian Northwest and the northern part of the Labrador Peninsula are not at risk.

For the mean winter temperature, some regions around the Hudson Bay and in the Canadian Shield are identified to be at a high risk while a comparatively small region north of the Mexican Gulf is considered not at risk for extreme warming.

For the computation time we remark that the entire analysis of one season,
including the pointwise linear regression and the multiplier bootstrap to obtain
the CoPE sets was performed in under five seconds on a regular laptop.

\subsection*{Acknowledgment}

M.S. acknowledges support by the ``Studienstiftung des Deutschen
Volkes'' and the SAMSI 2013-2014 program on Low-dimensional Structure
in High-dimensional Systems. A.S. and S.S. were partially supported by NIH grant
R01 CA157528. S.S. began working on this research  while he was a Scientist with the Institute for Mathematics Applied to the Geosciences, National Center for Atmospheric Research, Boulder, CO. All authors wish to thank
the North American Regional Climate Change Assessment Program (NARCCAP)
for providing the data used in this paper. NARCCAP is funded by the
National Science Foundation (NSF), the U.S. Department of Energy (DoE),
the National Oceanic and Atmospheric Administration (NOAA), and the
U.S. Environmental Protection Agency Office of Research and Development (EPA).

\nocite{*}
\bibliographystyle{plainnat}
\phantomsection\addcontentsline{toc}{section}{\refname}\bibliography{LvlSets}

\appendix
\section{Proofs}
\begin{proof}[Proof of Theorem \ref{thm:inclusion}]
We start by showing that 
\begin{equation}
\liminf_{n\ra\infty}P\left[\wh A_{c}^{+}\subset A_{c}\subset\wh A_{c}^{-}\right]\geq P\left[\sup_{\partial A_{c}}|G(s)|\leq a\right].\label{eq:liminf}
\end{equation}
For $\eta>0$ define the inflated boundary $A_{c}^{\eta}=\{\s\in S:c-\eta\sigma(\s)\leq\mu(\s)\leq c+\eta\sigma(\s)\}$. The idea of the proof is that, loosely speaking,  points outside of $A_{c}^{\eta}$ become irrelevant in the limit $n\ra\infty$ since their values are far from $c$ and, if we let $\eta$ go to zero at an appropriate rate, we finally end up with the boundary $\dAc$.
More precisely, we note that 
\begin{align*}
 & \frac{\wh{\mu}_{n}(\s)-\mu(\s)}{\tau_{n}\sigma(\s)}\geq-a\text{ for all }\s\in A_{c}\cap A_{c}^{\eta}\text{ and }\frac{\wh{\mu}_{n}(\s)-\mu(\s)}{\tau_{n}\sigma(\s)}\geq-\eta\tau_{n}^{-1}-a\text{ for all }\s\in A_{c}\setminus A_{c}^{\eta},
\end{align*}
implies that $\wh{\mu}_{n}(\s)\geq c-\tau_{n}\sigma(\s)a$ for all $\s\in A_{c}$ and hence $A_{c}\subset\wh A_{c}^{-}$.
Similarly, 
\begin{align*}
 \frac{\wh{\mu}_{n}(\s)-\mu(\s)}{\tau_{n}\sigma(\s)}<a \text{ for all }\s\in\left(S\setminus A_{c}\right)\cap A_{c}^{\eta} \text{ and }\frac{\wh{\mu}_{n}(\s)-\mu(\s)}{\tau_{n}\sigma(\s)}<\eta\tau_{n}^{-1}+a \text{ for all }\s\in\left(S\setminus A_{c}\right)\setminus A_{c}^{\eta}
\end{align*}
implies $A_{c}^{+}\subset A_{c}$.
Combining these observations, we see that $\wh A_{c}^{+}\subset A_{c}\subset\wh A_{c}^{-}$ holds, provided that $\sup_{\s\in A_{c}^{\eta_{n}}}\left|\frac{\wh{\mu}_{n}(\s)-\mu(\s)}{\tau_{n}\sigma(\s)}\right|<a$ and $\sup_{\s\in S\setminus A_{c}^{\eta_{n}}}\left|\frac{\wh{\mu}_{n}(\s)-\mu(\s)}{\tau_{n}\sigma(\s)}\right|<a+\eta_{n}\tau_{n}^{-1}$.
Now, let $\left\{ \eta_{n}\right\} _{n\in\NN}$ be a sequence of positive
numbers such that $\eta_{n}\ra0$ and $\eta_{n}\tau_{n}^{-1}\ra\infty$.
We can then write 
\begin{align}
 & P\left[\wh A_{c}^{+}\subset A_{c}\subset\wh A_{c}^{-}\right]\geq P\left[\sup_{\s\in A_{c}^{\eta_{n}}}\left|\frac{\wh{\mu}_{n}(\s)-\mu(\s)}{\tau_{n}\sigma(\s)}\right|<a\mbox{ and }\sup_{\s\in S\setminus A_{c}^{\eta_{n}}}\left|\frac{\wh{\mu}_{n}(\s)-\mu(\s)}{\tau_{n}\sigma(\s)}\right|<a+\eta_{n}\tau_{n}^{-1}\right]\nonumber \\
 & \geq\underset{(I)}{\underbrace{P\left[\sup_{\s\in A_{c}^{\eta_{n}}}\left|\frac{\wh{\mu}_{n}(\s)-\mu(\s)}{\tau_{n}\sigma(\s)}\right|<a\right]}}+\underset{(II)}{\underbrace{P\left[\sup_{\s\in S\setminus A_{c}^{\eta_{n}}}\left|\frac{\wh{\mu}_{n}(\s)-\mu(\s)}{\tau_{n}\sigma(\s)}\right|<a+\eta_{n}\tau_{n}^{-1}\right]-1}}.\label{eq:proof_sup_split}
\end{align}
We first show that the term (II) goes to zero. To this end let $\delta>0$
arbitrary. Let $b\in\RR$ such that $P\left[\sup_{\s\in S}|G(\s)|<b\right]\geq1-\delta$
and $n_{0}\in\NN$ such that $a+\eta_{n}\tau_{n}^{-1}\geq b$ for
all $n\geq n_{0}$. Also, let $n_{1}$ large enough such that 
\[
\left|P\left[\sup_{\s\in S}\left|\frac{\wh{\mu}_{n}(\s)-\mu(\s)}{\tau_{n}\sigma(\s)}\right|<b\right]-P\left[\sup_{\s\in S}|G(\s)|<b\right]\right|<\delta,
\]
for all $n\geq n_{1}$. In consequence, for all $n\geq\max\left\{ n_{0},n_{1}\right\} $
\begin{align*}
P\left[\sup_{\s\in S\setminus A_{c}^{\eta_{n}}}\left|\frac{\wh{\mu}_{n}(\s)-\mu(\s)}{\tau_{n}\sigma(\s)}\right|<a+\eta_{n}\tau_{n}^{-1}\right] & \geq P\left[\sup_{\s\in S}\left|\frac{\wh{\mu}_{n}(\s)-\mu(\s)}{\tau_{n}\sigma(\s)}\right|<b\right] \geq1-2\delta.
\end{align*}
Since $\delta>0$ was arbitrary it follows that (II) converges to
zero as $n\ra\infty$.

To prove convergence of (I) we need the following 
\begin{lem}
\label{lem:Hausdorff}Under Assumptions \ref{assu:CLT_and_mu_muhat}
part (a) if $\eta_{n}\ra0$ then the Hausdorff distance
$\delta_{n}:=d_{H}\left(A_{c}^{\eta_{n}},\partial A_{c}\right)\ra0$.
\end{lem}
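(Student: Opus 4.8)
The plan is to unfold the Hausdorff distance as $d_{H}(A_{c}^{\eta_{n}},\partial A_{c})=\max\{\sup_{\s\in A_{c}^{\eta_{n}}}d(\s,\partial A_{c}),\ \sup_{\s\in\partial A_{c}}d(\s,A_{c}^{\eta_{n}})\}$, where $d(\s,B)=\inf_{\t\in B}\norm{\s-\t}$, and to bound the two one-sided suprema separately. The first thing I would observe is that the second supremum vanishes identically for every $n$. Indeed, each $\s\in\partial A_{c}$ satisfies $\mu(\s)=c$ by Assumption \ref{assu:CLT_and_mu_muhat}(a), and since $\sigma(\s)\geq0$ this gives $c-\eta_{n}\sigma(\s)\leq\mu(\s)\leq c+\eta_{n}\sigma(\s)$, i.e. $\s\in A_{c}^{\eta_{n}}$. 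Hence $\partial A_{c}\subseteq A_{c}^{\eta_{n}}$ and $\sup_{\s\in\partial A_{c}}d(\s,A_{c}^{\eta_{n}})=0$; only the first supremum carries content.

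For the first supremum I would argue by contradiction, exploiting the compactness of $S$. Suppose $\sup_{\s\in A_{c}^{\eta_{n}}}d(\s,\partial A_{c})\not\to0$. Then there are a constant $\varepsilon>0$, a subsequence, and points $\s_{n}\in A_{c}^{\eta_{n}}$ with $d(\s_{n},\partial A_{c})\geq\varepsilon$. Compactness of $S$ lets me pass to a convergent sub-subsequence $\s_{n}\ra\s^{*}\in S$. Membership $\s_{n}\in A_{c}^{\eta_{n}}$ yields $|\mu(\s_{n})-c|\leq\eta_{n}\sigma(\s_{n})$; because $\sigma$ is continuous on the compact set $S$ it is bounded, so the right-hand side tends to $0$ as $\eta_{n}\ra0$, and continuity of $\mu$ forces $\mu(\s^{*})=c$. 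Since the map $d(\cdot,\partial A_{c})$ is $1$-Lipschitz, hence continuous, passing to the limit in $d(\s_{n},\partial A_{c})\geq\varepsilon$ gives $d(\s^{*},\partial A_{c})\geq\varepsilon>0$, so $\s^{*}\notin\partial A_{c}$. This contradicts $\s^{*}\in\partial A_{c}$, which is the conclusion I need to draw from $\mu(\s^{*})=c$.

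That last conversion is exactly where the substance of the hypothesis lives, and I expect it to be the only delicate point: the assumption that the level set $\{\s:\mu(\s)=c\}$ coincides with the topological boundary $\partial A_{c}$ is precisely what turns $\mu(\s^{*})=c$ into $\s^{*}\in\partial A_{c}$. Without it, $\mu(\s^{*})=c$ would be compatible with $\s^{*}$ lying in the interior or exterior of $A_{c}$ (for example on a plateau of $\mu$ at height $c$, or at an isolated touching point), and then the inflated sets $A_{c}^{\eta_{n}}$ would fail to collapse onto the contour no matter how fast $\eta_{n}\ra0$. The remaining ingredients -- boundedness of $\sigma$, continuity of $\mu$, and closedness of $\partial A_{c}$ -- are routine consequences of compactness of $S$ together with continuity, so I would dispatch them quickly and concentrate the write-up on the compactness argument above.
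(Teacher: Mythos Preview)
Your proof is correct and follows essentially the same compactness-and-contradiction route as the paper: pick a bad sequence $\s_{n}\in A_{c}^{\eta_{n}}$ bounded away from $\partial A_{c}$, extract a limit $\s^{*}$ in the compact $S$, argue that $\mu(\s^{*})=c$, and invoke Assumption~\ref{assu:CLT_and_mu_muhat}(a) to place $\s^{*}\in\partial A_{c}$, contradicting $d(\s^{*},\partial A_{c})\geq\varepsilon$. The paper compresses the middle step into the single line $\s^{*}\in\bigcap_{\eta>0}A_{c}^{\eta}=\partial A_{c}$, whereas you unpack it via boundedness of $\sigma$ and continuity of $\mu$; you are also more explicit than the paper in dispatching the trivial direction $\partial A_{c}\subset A_{c}^{\eta_{n}}$ of the Hausdorff distance, which the paper leaves implicit.
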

\begin{proof}
	Let us define the set $(\partial A_c)_\varepsilon:=\{\s\in S:d(\s,\dAc)<\varepsilon\}$. We prove the assertion by showing that for any $\varepsilon>0$ there exists an $\eta>0$ such that $A_c^\eta\subset (\dAc)_\varepsilon$. To this end, assume the contrary. Then, there exists  $\varepsilon>0$ such that for any $\eta>0$ we find $\s_\eta\in A_c^\eta$ with $d(\s_\eta, \dAc)\geq\varepsilon$. The sequence $(s_\eta)_{\eta\downarrow 0}$ is contained in the compact set $S$ and hence has a convergent subsequence with limit $\s^*$, say. By construction, we have $\s^*\in\cap_{\eta>0}A_c^\eta =\dAc$. On the other hand, $0=d(\s^*,\dAc)=\lim_{\eta\ra0}d(\s_\eta,\dAc)\geq \varepsilon$, a contradiction.
\end{proof}
Recall that for a function $f:S\ra\RR$ and some number $\delta>0$
the modulus of continuity is defined as 
$w(f,\delta)=\sup_{|\s_{1}-\s_{2}|\leq\delta}\left|f(\s_{1})-f(\s_{2})\right|$.
Since $\left\{ \tau_{n}^{-1}\sigma(\s)^{-1}\left(\wh{\mu}_{n}(\s)-\mu(\s)\right)\right\} _{n\in\NN}$
is weakly convergent, we have 
\begin{equation}
\lim_{\delta\ra0}\limsup_{n\ra\infty}P\left[w\left(\frac{\wh{\mu}_{n}(\s)-\mu(\s)}{\tau_{n}\sigma(\s)},\delta\right)\geq\zeta\right]=0
\end{equation}
for all positive $\zeta$  \cite[Prop. 2.4.1 and Exc. 3.3.1]{Khoshnevisan2002}. Together with Lemma \ref{lem:Hausdorff} this implies
\[
\left|\sup_{\s\in A_{c}^{\eta_{n}}}\left|\frac{\wh{\mu}_{n}(\s)-\mu(\s)}{\tau_{n}\sigma(\s)}\right|-\sup_{\s\in\partial A_{c}}\left|\frac{\wh{\mu}_{n}(\s)-\mu(\s)}{\tau_{n}\sigma(\s)}\right|\right|\leq w\left(\frac{\wh{\mu}_{n}(\s)-\mu(\s)}{\tau_{n}\sigma(\s)},\delta_{n}\right)
\ra 0,
\]
in probability.
Since $\sup_{\s\in\partial A_{c}}\left|(\wh{\mu}_{n}(\s)-\mu(\s)) / (\tau_{n}\sigma(\s))\right|$
converges in distribution to $\sup_{\s\in\partial A_{c}}\left|G(\s)\right|$
this yields 
\[
\sup_{\s\in A_{c}^{\eta_{n}}}\left|\frac{\wh{\mu}_{n}(\s)-\mu(\s)}{\tau_{n}\sigma(\s)}\right|\ra\sup_{\s\in\partial A_{c}}\left|G(\s)\right|
\]
in distribution. In view of \eqref{eq:proof_sup_split} this completes
the proof of \eqref{eq:liminf}.

It remains to prove the opposite inequality, i.e. 
\begin{equation}
\limsup_{n\ra\infty}P\left[\wh A_{c}^{+}\subset A_{c}\subset\wh A_{c}^{-}\right]\leq P\left[\sup_{\partial A_{c}}|G(s)|\leq a\right].\label{eq:limsup}
\end{equation}
If for some arbitrary $\delta>0$ we have $\tau_{n}^{-1}\sigma(\s)^{-1}\left(\wh{\mu}_{n}(\s)-c\right)\geq a+\delta$
for some $\s\in\partial A_{c}$ then by continuity there is a $\s\in S\setminus A_{c}$
for which $\tau_{n}^{-1}\sigma(\s)^{-1}\left(\wh{\mu}_{n}(\s)-c\right)\geq a$
and hence the inclusion $\wh A_{c}^{+}\subset A_{c}$ does not hold.
Since an analogous argument works for the inclusion $A_{c}\subset\wh A_{c}^{-}$,
we have 
\begin{align*}
  P\left[\wh A_{c}^{+}\subset A_{c}\subset\wh A_{c}^{-}\right]
 & \leq1-P\left[\exists\s\in\partial A_{c}:\frac{\wh{\mu}_{n}(\s)-c}{\tau_{n}\sigma(\s)}\geq a+\delta\mbox{ or }\frac{\wh{\mu}_{n}(\s)-c}{\tau_{n}\sigma(\s)}\leq-a-\delta\right]\\
 & \leq1-P\left[\sup_{\s\in\partial A_{c}}\left|\frac{\wh{\mu}_{n}(\s)-c}{\tau_{n}\sigma(\s)}\right|\geq a+\delta\right]
  \ra P\left[\sup_{\s\in\partial A_{c}}\left|G(\s)\right|\leq a+\delta\right].
\end{align*}
Since $\delta>0$ was arbitrary and $\sup_{\s\in\partial A_{c}}|G(\s)|$
has a continuous distribution the bound \eqref{eq:limsup} follows.
\end{proof}

\begin{proof}[Proof of Corollary \ref{cor:inclusion_contour}]
	For any pair of nested sets $\wh{A}_c^\pm$ we have that $\wh A_{c}^{+}\subset A_{c}\subset\wh A_{c}^{-}$ implies $\partial A_c\subset \cl(\wh{A}_c^-\setminus\wh{A}_c^+)$. On the other hand, the latter will certainly fail to hold if $\sup_{\s\in\dAc}\left|\frac{\wh{\mu}_n(s)-c}{\tau_n\sigma(\s)}\right|>a$. Combining these two observations yields
	\[
	P\left[\wh A_{c}^{+}\subset A_{c}\subset\wh A_{c}^{-}\right]\leq P\left[\partial A_c\subset \cl(\wh{A}_c^-\setminus\wh{A}_c^+)\right]
	\leq P\left[\left|\sup_{\s\in\dAc}\frac{\wh{\mu}_n(s)-c}{\tau_n\sigma(\s)}\right|\leq a\right].
	\]
	Taking the limit $n\ra\infty$ of this inequality and using Theorem \ref{thm:inclusion} gives the assertion.
\end{proof}

\begin{proof}[Proof of Theorem \ref{thm:weak-convergence}]
We begin by proving part (a). Let us define $\mathbf{A}=\left(\X^{T}\X\right)^{-1}\X^{T}$, giving
$\wh{\b}(\s)-\b(\s)=\mathbf{A}\e(\s)$. In order to prove weak convergence
of the process we first show convergence of the finite dimensional
distributions and then tightness of the sequence \cite[Prop. 3.3.1]{Khoshnevisan2002}.

For the former, let $\s_{1},\dots,\s_{K}\in S$ be arbitrary. We need
to show that with 
\begin{align*}
U & =\left(\sqrt{\X^{T}\X}\sigma(\s_{1})^{-1}\left(\wh{\b}(\s_{1})-\b(\s_{1})\right),\dots,\sqrt{\X^{T}\X}\sigma(\s_{K})^{-1}\left(\wh{\b}(\s_{K})-\b(\s_{K})\right)\right)\\
 & =\mathbf{I}_{K}\otimes\left[\sqrt{\X^{T}\X}\right]\left(\sigma(\s_{1})^{-1}\mathbf{A}\e(\s_{1}),\dots,\sigma(\s_{K})^{-1}\mathbf{A}\e(\s_{K})\right),
\end{align*}
(here, $'\otimes'$ denotes the Kronecker product of two matrices)
we have convergence $U\ra\mathcal{N}(0,\left[\cv(\s_{i},\s_{j})\right]_{i,j=1}^{K}\otimes\mathbf{I}_{p})$
in distribution. We readily see that $E\left[U\right]=0$ and for
the covariance we compute 
\begin{align*}
  \cov\left[U\right]
 & =\mathbf{I}_{K}\otimes\left[\sqrt{\X^{T}\X}\mathbf{A}\right]E\left\{ \left(\sigma(\s_{1})^{-1}\e(\s_{1}),\dots,\sigma(\s_{K})^{-1}\e(\s_{K})\right)\left(\sigma(\s_{1})^{-1}\e(\s_{1}),\dots,\sigma(\s_{K})^{-1}\e(\s_{K})\right)^{T}\right\} \\
 & \quad\times\mathbf{I}_{K}\otimes\left[\mathbf{A}^{T}\sqrt{\X^{T}\X}\right]\\
 & =\left\{ \mathbf{I}_{K}\otimes\left[\sqrt{\X^{T}\X}\mathbf{A}\right]\right\} \left\{ \left[\cv(\s_{i},\s_{j})\right]_{i,j=1}^{K}\otimes\mathbf{I}_{n}\right\} \left\{ \mathbf{I}_{K}\otimes\left[\mathbf{A}^{T}\sqrt{\X^{T}\X}\right]\right\} \\
 & =\left[\cv(\s_{i},\s_{j})\right]_{i,j=1}^{K}\otimes\sqrt{\X^{T}\X}\mathbf{A}\mathbf{A}^{T}\sqrt{\X^{T}\X}
  =\left[\cv(\s_{i},\s_{j})\right]_{i,j=1}^{K}\otimes\mathbf{I}_{p}.
\end{align*}
We employ the Cramér-Wold device to show convergence of $U$. Indeed,
let and $(\alpha_{1},\dots,\alpha_{K})\in\RR^{K\times p}$ be some
fixed arbitrary vector and compute
\begin{align*}
\left\langle U,\alpha\right\rangle  & =\big\langle\left(\sigma(\s_{1})^{-1}\e(\s_{1}),\dots,\sigma(\s_{K})^{-1}\e(\s_{K})\right),\mathbf{I}_{K}\otimes\left[\mathbf{A}^{T}\sqrt{\X^{T}\X}\right]\alpha\big\rangle\\
 & =\sum_{i=1}^{K}\sigma(\s_{i})^{-1}\langle\e(\s_{i}),\mathbf{A}^{T}\sqrt{\X^{T}\X}\alpha_{i}\rangle
  =\sum_{i=1}^{K}\sum_{j=1}^{n}\sigma(\s_{i})^{-1}\ve_{j}(\s_{i})\left(\mathbf{A}^{T}\sqrt{\X^{T}\X}\alpha_{i}\right)_{j}.
\end{align*}
By interchanging the sums and defining 
$W_{j}=\sum_{i=1}^{K}\sigma(\s_{i})^{-1}\ve_{j}(\s_{i})\left(A^{T}\sqrt{\X^{T}\X}\alpha_{i}\right)_{j}$,
we have managed to write $\langle U,\alpha\rangle$ as a sum of independent
random variables $\langle U,\alpha\rangle=\sum_{j}W_{j}$. The goal
is now to use the CLT in the form of Lyapunov for the random variables
$W_{j}$. To this end compute 
$\var\left[\sum_{j=1}^{n}W_{j}\right]  =\var\left[\langle U,\alpha\rangle\right]=\alpha^{T}\cov\left[U\right]\alpha$ 
and note that since we have already showed $U$ to have the right
covariance the claimed convergence will follow once we establish the
Lyapunov condition. For this purpose let $\delta$ be as in Assumption
\ref{ass:noise-assumption-1} to give 
\begin{align*}
\sum_{j=1}^{n}E|W_{j}|^{2+\delta} & \leq K\sum_{j=1}^{n}\sum_{i=1}^{K}\sigma(\s_{i})^{-(2+\delta)}E|\ve_{j}(\s_{i})|^{2+\delta}\left|\left(\mathbf{A}^{T}\sqrt{\X^{T}\X}\alpha_{i}\right)_{j}\right|^{2+\delta}\\
 & \leq\sum_{j=1}^{n}\sum_{i=1}^{K}CK\left|\left|\left(\mathbf{A}^{T}\sqrt{\X^{T}\X}\alpha_{i}\right)\right|\right|_{\infty}^{2+\delta}
  \leq CK^{2}||\alpha||_{\infty}n\left|\left|\mathbf{A}^{T}\sqrt{\X^{T}\X}\right|\right|_{\infty}^{2+\delta}\ra0,
\end{align*}
as $n\ra\infty$. This concludes the proof of convergence for the finite dimensional distributions.

It remains to show tightness of the sequence $\left\{ \sqrt{\X^{T}\X}\mathbf{A}\sigma(\s)^{-1}\e(\s)\right\} $.
For any block $B\subset S$ we have with $\gamma$ and $\beta$ as
in Assumption \ref{ass:noise-assumption-1} 
\begin{align*}
E\norm{\sqrt{\X^{T}\X}\mathbf{A}(\sigma^{-1}\e)(B)}_{1}^{2+\gamma} & \leq\norm{\sqrt{\X^{T}\X}\mathbf{A}}_{1}^{2+\gamma}\sum_{j=1}^{n}E|(\sigma^{-1}\ve)(B)|^{\gamma}\\
 & \leq Cn\norm{\mathbf{A}^{T}\sqrt{\X^{T}\X}}_{\infty}^{2+\gamma}\lambda(B)^{1+\beta} \leq C'\lambda(B)^{1+\beta},
\end{align*}
which implies tightness (cf. \citet[Thm. 3]{Bickel1971}). 

The statement (b) follows from the fact that 
\begin{align*}
\pi_{n}^{-\nicefrac{1}{2}}\sigma(\s)^{-1}\left(\wh b_{1}(\s)-b_{1}(\s)\right) & =\pi_{n}^{-\nicefrac{1}{2}}e_{1}^{T}(\X^{T}\X)^{-\nicefrac{1}{2}}\sigma(\s)^{-1}\sqrt{\X^{T}\X}\left(\wh{\b}(\s)-\b(\s)\right)\\
 & \ra v^{T}G^{\otimes p}(\s)\stackrel{\mathcal{D}}{=}||v||_{2}G(\s),
\end{align*}
where the weak convergence from the previous part 
is used. 

The last part (c) of the Theorem is a direct application of Theorem \ref{thm:inclusion}, where parts (a) and (b) guarantee that the assumptions are satisfied.
\end{proof}

\begin{proof}[Proof of Proposition \ref{prop:data_CLT}]
In order to be able to apply Theorem \ref{thm:weak-convergence} we compute
\[
\X^{T}\X=\left(\begin{array}{cccc}
\nicefrac{n}{2} & \nicefrac{n}{2} & 0 & 0\\
\nicefrac{n}{2} & n & 0 & 0\\
0 & 0 & \omega^{(a)} & 0\\
0 & 0 & 0 & \omega^{(b)}
\end{array}\right),\quad\omega^{(a)}=\sum_{j=1}^{n^{(a)}}\left(t_{j}^{(a)}\right)^{2},\:\omega^{(b)}=\sum_{j=n^{(a)}+1}^{n^{(a)}+n^{(b)}}\left(t_{j}^{(b)}\right)^{2}.
\]
It follows that 
\[
\left(\X^{T}\X\right)^{-1}=\begin{pmatrix}\nicefrac{4}{n} & -\nicefrac{2}{n} & 0 & 0\\
-\nicefrac{2}{n} & \nicefrac{2}{n} & 0 & 0\\
0 & 0 & \nicefrac{1}{\omega^{(a)}} & 0\\
0 & 0 & 0 & \nicefrac{1}{\omega^{(a)}}
\end{pmatrix},\quad(\X^{T}\X)^{-\nicefrac{1}{2}}=\sqrt{\frac{n}{10}}\begin{pmatrix}\nicefrac{6}{n} & -\nicefrac{2}{n} & 0 & 0\\
-\nicefrac{2}{n} & \nicefrac{4}{n} & 0 & 0\\
0 & 0 & \sqrt{\frac{10}{n\omega^{(a)}}} & 0\\
0 & 0 & 0 & \sqrt{\frac{10}{n\omega^{(b)}}}
\end{pmatrix}.
\]
With this we obtain 
\[
||\X(\X^{T}\X)^{-\nicefrac{1}{2}}||_{\infty}\leq\frac{2}{\sqrt{n}}+\frac{\max_{1\leq j\leq\nicefrac{n}{2}}|t_{j}^{(a)}|}{\sqrt{\omega^{(a)}}}+\frac{\max_{\nicefrac{n}{2}+1\leq j\leq n}|t_{j}^{(b)}|}{\sqrt{\omega^{(b)}}}.
\]
Now we note that since the design points $t_{j}^{(a)}$ and $t_{j}^{(b)}$
are equally spaced by Assumption \ref{ass:appl} we have 
$\max_{1\leq j\leq n}|t_{j}^{(a)}|=\mathcal{O}(n)$ and $\omega^{(a)}=\mathcal{O}(n^{3})$,
and the same is true for the $(b)$-counterparts. This shows that
$||\X(\X^{T}\X)^{-\nicefrac{1}{2}}||_{\infty}=\mathcal{O}(n^{-\nicefrac{1}{2}})$
and therefore Assumptions \ref{ass:appl} imply Assumptions \ref{ass:noise-assumption-1}.
Now, in the notation of Theorem \ref{thm:weak-convergence}, we have $\pi_{n}^{-\nicefrac{1}{2}}=\nicefrac{\sqrt{n}}{2}$
and 
\[
\pi_{n}^{-\nicefrac{1}{2}}e_{1}^{T}(\X^{T}\X)^{-\nicefrac{1}{2}}=\frac{\sqrt{n}}{2}\sqrt{\frac{n}{10}}\begin{pmatrix}\frac{6}{n} & -\frac{2}{n}\end{pmatrix}=\frac{1}{2\sqrt{10}}\begin{pmatrix}6 & -2\end{pmatrix}=:\mathbf{v}^{T},
\]
so that $||\mathbf{v}||_{2}=1$. This finally gives 
$\frac{\sqrt{n}}{2}\sigma(\s)^{-1}(\wh b_{1}(\s)-b_{1}(\s))\ra G(\s)$.
\end{proof}
\end{document}